\newenvironment{proof}
  {\setlength{\parindent}{0pt}{\bf Proof:}}
    {\rule{1.4mm}{1.9mm}}
\newcommand{\ignore}[1]{} 
\newcommand{\cA}{{\cal A}}
\newcommand{\cB}{{\cal B}}
\newcommand{\modop}[1]{{\llbracket #1 \rrbracket}}
\DeclareMathOperator*{\MS}{MS}
\newcommand{\partitle}[1]{{\noindent \bf #1}}
\providecommand*{\cupdot}{%
  \mathbin{%
    \mathpalette\@cupdot{}%
  }%
}
\newcommand*{\@cupdot}[2]{%
  \ooalign{%
    $\m@th#1\cup$\cr
    \hidewidth$\m@th#1\cdot$\hidewidth
  }%
}
\tikzstyle{materia}=[draw, fill=blue!20, text width=6.0em, text centered,
\tikzstyle{module} = [materia, text width=8em, minimum width=10em,
\tikzstyle{texto} = [above, text width=6em, text centered]
\tikzstyle{linepart} = [draw, thick, color=black!50, -latex', dashed]
\tikzstyle{line} = [draw, thick, color=black!50, -latex']
\tikzstyle{ur}=[draw, text centered, minimum height=0.01em]
\newcommand{\primitivemodule}[2]{node (mod#1) [module] {#2}}
\def\cA{\ensuremath{{\cal{A}}}}
\def\cB{\ensuremath{{\cal{B}}}}
\newcommand{\ignore}[1]{}
\newcommand{\partitle}[1]{{\noindent \bf #1 -- }}
\newtheorem{corollary}{Corollary}
\newtheorem{definition}{Definition}
\newtheorem{example}{Example}
\newtheorem{proposition}{Proposition}
\newtheorem{remark}{Remark}
\newtheorem{theorem}{Theorem}
\newcounter{inlineequation}
\title{Three Semantics for Modular Systems }
\author{Shahab Tasharrofi and Eugenia Ternovska\\Simon Fraser
University, email: \{ter, sta44\}@cs.sfu.ca}
\begin{document}
\nocopyright
\maketitle

\begin{abstract}

In this paper, we further develop the framework of Modular Systems that lays model-theoretic foundations for combining different declarative languages, agents and solvers. 
 We introduce a  multi-language logic of modular systems. We define two novel semantics, a structural operational semantics, and an inference-based semantics. We prove the new semantics are equivalent to the original model-theoretic semantics and describe future research directions.

\end{abstract}

\section{Introduction}\label{sec:introduction}

Modular Systems (MS)  \cite{TT:FROCOS:2011-long} is a language-independent formalism 
representing and  solving  complex problems specified declaratively.
There are several motivations for introducing the MS formalism: 
\begin{compactitem}
\item  the need to be able to
split a large problem into subproblems, and to use the most suitable formalism
for each part,
\item the need to model distributed combinations of programs, knowledge
bases, languages, agents, etc.,
\item the need to model collaborative solving of complex tasks, 
such as in satisfiability-based solvers.
\end{compactitem}
 The MS formalism gave a unifying view, through a semantic approach, to formal and declarative modelling of modular systems.
In that initial work, individual modules were considered from both model-theoretic and
operational view. Under the model-theoretic view, a module is a set (or
class) of structures, and under the operational view it is an operator, mapping
a subset of the vocabulary to another subset. An abstract algebra on modules
was given. It is similar to Codd's relational algebra and allows one to combine modules on abstract model-theoretic
level, independently from what languages are used for describing them. 
An important operation in the algebra is the loop (or feedback)
operation, since iteration underlies many solving methods. We showed 
that the power of the loop operator is such that the combined modular system
can capture all of the complexity class NP even when each module is
deterministic and polytime. Moreover, in general, adding loops gives a jump in
the polynomial time hierarchy, one step from the highest complexity of the
components. It is also shown that each module can be viewed as an operator, and
when each module is (anti-) monotone, the number of the potential solutions can
be reduced by using ideas from the logic programming community.

Inspired by practical combined solvers,
the authors of \cite{TWT:WLP:2011,TWT:WLP-INAP:2012}  introduced an algorithm to solve model expansion tasks for modular systems. The evolution processes of different modules are jointly considered.
The algorithm incrementally constructs structures for the expanded vocabulary by communicating with oracles associated 
with each module, who provide additional information in the form of reasons 
and advice to navigate the search.
It was shown that the algorithm closely corresponds to what is done in practice in  different areas such as Satisfiability Modulo Theories (SMT),
Integer Linear Programming (ILP), Answer Set Programming (ASP).

\ignore{
Here, we develop 
the framework of modular systems that Hence, 
users are not restricted to an individual language anymore and can now define each 
sub-part of their problems in a different language while still being able to combine these 
languages and to solve the main problem.

ntroduce and investigate a modular approach towards specifying and solving complex tasks so that different and possibly 
hetero- geneous parts can work together.

\subsubsection{motivation for addressing modularity}

The authors of .......... \cite{TT:FRACOS:2011}    proposed a model-theoretic approach to multi-language modular systems 
I
In this work, we continue this line of research originated by \cite{J,TT:2011} to extend the MX framework to modular systems.

}
\subsubsection{Background: Model Expansion} In  \cite{MT05}, the authors  formalize combinatorial search problems  
 as the  task of
{\em model expansion (MX)}, the logical task of expanding a given (mathematical)
structure with new relations. 
Formally, the user axiomatizes the problem in
some logic $\cal L$. This axiomatization relates an instance of the problem (a
{\em finite structure}, i.e., a universe together with some relations and functions),
and its solutions (certain {\em expansions} of that structure with new relations or
functions). Logic $\cal L$ corresponds to a specification/modelling language. It
could be an extension of first-order logic such as FO(ID), or an ASP language,
or a modelling language from the CP community such as ESSENCE \cite{ESSENCE}.
The MX framework was later extended to infinite structures to formalise built-in arithmetic 
in specification languages \cite{TM:IJCAI:2009,TT:arithmetic:LPAR-17}.

Recall that a vocabulary is a set of non-logical (predicate and function)
symbols. An interpretation for a vocabulary is provided by a {\em structure},
which consists of a set, called the domain or universe and denoted by $dom(.)$,
together with a collection of relations and (total) functions over the universe.
A structure can be viewed as an {\em assignment} to the elements of
the vocabulary. An expansion of a structure $\cal A$ is a structure $\cal B$
with the same universe, and which has all the relations and functions of
$\cal A$, plus some additional relations or functions.

Formally, the task of model expansion for an arbitrary logic $\cal L$  is:
Given an $\cal L$-formula $\phi$ with vocabulary
  $\sigma \cup \varepsilon$ and a structure $\cal A$ for $\sigma$
find an expansion of $\cal A$, to $\sigma \cup \varepsilon$,
   that satisfies $\phi$.
Thus, we expand the structure $\cal A$ with relations and functions to interpret
$\varepsilon$, obtaining a model $\cB$ of $\phi$. 
We call $\sigma$, the
vocabulary of $\cal A$, the {\em instance} vocabulary, and
$\varepsilon := vocab(\phi)\setminus \sigma$ the {\em expansion} vocabulary\footnote{By ``$:=$'' we mean ``is by definition'' or ``denotes''. By $vocab(\phi)$ we understand the vocabulary of $\phi$.}.  If $\sigma=\emptyset$, we talk about {\em model generation}, a particular type of model expansion that 
is often studied. 

Given a specification, we can talk about a set of
$\sigma \cup \varepsilon$-structures which satisfy the specification.
Alternatively, we can simply talk about a given set of
$\sigma \cup \varepsilon$-structures as an MX-task, without mentioning a
particular specification the structures satisfy. These sets of structures will 
be called {\em modules} later in the paper.
This abstract view makes our
study of modularity language-independent.

%

\begin{example}\label{ex:colouring}
The following logic program $\phi$ constitutes an MX specification for Graph
3-colouring:

\begin{small}
$$
\begin{array}{c}
1\{R(x),B(x),G(x)\}1 \leftarrow V(x). \\
\bot \leftarrow R(x), R(y), E(x,y). \\
\bot \leftarrow B(x), B(y), E(x,y). \\
\bot \leftarrow G(x), G(y), E(x,y). \\
\end{array}
$$
\end{small}

An instance is a structure for vocabulary $\sigma = \{E\}$, i.e., a graph
${\cal A} = {\cal G} = ( V; E)$. The task is to find an interpretation for the
symbols of the expansion vocabulary $\varepsilon =\{ R, B, G\}$ such that the
expansion of $\cA$ with these is a model of $\phi$:
\begin{small}

$$
\underbrace{\overbrace{(V; E^{\cal A}}^{\cal A}, \ 
R^{\cal B}, B^{\cal B}, G^{\cal B} )}_{\cal B} \models\phi.
$$
\end{small}
The interpretations of $\varepsilon$, for structures $\cal B$ that satisfy
$\phi$, are exactly the proper 3-colourings of $\cal G$.
\end{example}

\ignore{
\begin{example} Separation of instances from problem descriptions is an important feature of our framework.
For the Business Process Planner example from the Introduction, instances may
include, for example, particular kinds of goods
to be shipped, maps of inter-city roads (graphs), particular shipping deadlines, types of trucks to be used, etc.. One specification is re-used for multiple problem instances.
An example of such a specification for a particular variant of Example \ref{ex:BPP} is given in the Appendix.
\end{example}
}


The model expansion task is very common in declarative programming, -- given an input, we want to generate a solution
to a problem specified declaratively. This is usually done through grounding, i.e., combining instance structure $\cal A$ to 
a problem description $\phi$ thus obtaining a reduction to a low-level solver language such as SAT, ASP, SMT, etc. 
Model Expansion framework was
introduced for systematic study of declarative languages.
In particular, it connects KR with descriptive complexity \cite{Immerman}.
It focuses on problems, not on problem instances, it {\em separates instances from problem descriptions}.
Using the MX framework, one can produce expressiveness and capturing results for specification languages to guarantee:

\begin{compactitem}
 \item universality of a language for a class of problems,
 
 \item feasibility of a language by bounding resources
needed to solve problems in that language.
\end{compactitem}

In terms of complexity, MX lies in-between model checking (MC) (a full structure is given) and satisfiability (SAT)
(we are looking for a structure). Model generation ($\sigma = \emptyset$) has the same complexity as MX.
The authors of  \cite{Kolokolova:complexity:LPAR:long} studied the complexity
of the three tasks, MC, MX and SAT, for several logics.
Despite the importance of MX task in several research areas, the task
has not yet been studied sufficiently, unlike the two related tasks of MC and SAT.

\subsubsection{General Research Goal: Adding Modularity} Given the importance of combining different languages and solvers 
to achieve ease of axiomatization and the best performance, our goal is to
 {\em  extend the MX framework to combine modules specified in different
languages. }
The following example illustrates what we are aiming for. 

\begin{example}[Factory as Model Expansion]
In Figure \ref{fig:factory},  a part of a simple factory is represented as a modular system. 
 Both the office  and the workshop modules can be
viewed as model expansion tasks. The  instance vocabulary of the workshop is
$\sigma=\{RawMaterials\}$ and expansion vocabulary $\varepsilon=\{R\}$. 
\begin{figure}
  \begin{center}
    \includegraphics[scale=0.57]{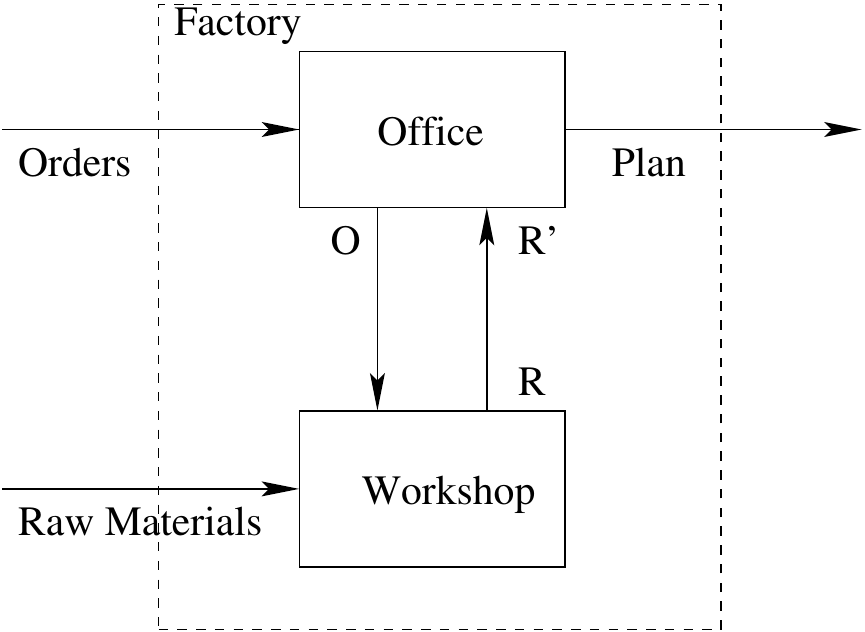}
  \end{center}
  \caption{Modular representation of a factory}
  \label{fig:factory}
\end{figure}
The bigger box with dashed borders is an MX task with instance vocabulary $\sigma'=\{Orders,RawMaterials\}$ and
expansion vocabulary $\varepsilon'=\{Plan\}$ (the ``internal'' expansion symbols $O$ and $R$ are hidden from the outside). This task is a compound MX task whose
result depends on the internal work of the office and the workshop, both of which can also have an internal structure
and be represented as modular systems themselves.
\end{example}

\subsubsection{Contributions of this paper}

In this paper, we further develop the framework of Modular Systems. 
In this framework, primitive modules represent individual knowledge bases, agents,
companies, etc. They can be axiomatized in a logic, be legacy systems, or be represented by a human 
who makes decisions. 
Unlike the previous work, we precisely define the notion of a well-formed modular system, and clearly 
separate the syntax of the algebraic language and the semantics of the algebra of modular systems.
The syntax of the algebra uses a few operations,
each of them (except feedback) is a counterpart of an operation in Codd's relational algebra, but over sets of structures rather than 
tables, and with directionality taken into account. 
The semantics of both primitive and compound modules is simply a set (class) of structures (an MX task).
By relying on the semantics of the algebra, we then introduce its natural counterpart in logic.
The logic for modular systems  allows for multiple logics axiomatizing  
individual modules in the same formula. We expect that multi-language formalisms 
such as ID-logic \cite{Denecker/Ternovska:2008:TOCL-long} will be shown to be particular instances of this logic, and other combinations 
of languages will be similarly developed. 

After giving the model-theoretic semantics of the algebra of modular systems, we define what it means, 
for a primitive module, to act as a non-deterministic operator on states of the world represented by structures
over a large vocabulary. For each expansion, there is a transition
to a new structure where the interpretation of the expansion changes, and everything else
moves to a new state by inertia. This definition is new and is more general than the one we introduced
in the previous work. We then define the semantics of the algebraic operators by Plotkin-style structural operational semantics
\cite{Plotkin}. This definition also new. We then prove the equivalence of the two semantics, operational and model-theoretic.
To illustrate the power of the projection operation, we show how a deterministic polytime program can be ``converted'' to a non-deterministic one  that solves an NP-complete problem. In general, adding projection produces a jump in the computational
complexity of the framework, similarly to feedback and union.

The authors of \cite{LT:PADL:2014} recently introduced an abstract modular inference systems formalism,
and shown how propagations in solvers can be analyzed using abstract inference rules they introduced.
We believe it is an important work. In this paper, we show how inference system can be lifted and integrated with our Modular Systems framework. The advantage of this integrations is that, with the help of the inference semantics,
 we  can now go into much greater level of details of propagation processes in our abstract algorithm 
 for solving modular systems. The  inference semantics  is the third semantics of modular systems mentioned in the title.

\ignore{One of the advantages of the MS framework is its simplicity.
 Each module is simply an MX task. There are just a few operations,
each of them (except feedback) is counterpart of an operation in Codd's relational algebra, but over sets of structures rather than 
tables, and with directionality taken into account. 
}

\subsubsection{The importance of abstract study of modularity}
We now would like to discuss the potential implications of abstract study of modularity for KR and declarative programming.

\subsubsection{A  family of multi-language KR 
formalisms} The Modular Systems framework gives rise to a  whole new family of KR formalisms by 
giving  the semantics to the combination of modules. This is can be viewed, for example, as a significant extension of answer set programming (ASP). In the past, combining  ASP programs that were created separately from each other was only possible, under some conditions,   in sequence. Now, we can combine them  in a  loop, use projections to hide parts of the vocabularies, etc. 
The previous results remain applicable. We expect, for example, that splittable programs under stable model semantics and stratifiable programs satisfy our conditions for 
sequential compositions of modules. 
Previously, in ASP, all modules had to be interpreted under one semantics (e.g. stable model semantics). Now, any model-theoretic semantics of individual modules is allowed. For example, some of the modules can be axiomatized, say, in first-order logic.
That is, in particular, our proposal amounts to  a ``modular multi-language ASP''.

\subsubsection{ Foundations in model theory} We believe that classic model theory is the right abstraction tool and a good common ground for combining formalisms developed in different communities. It is sufficiently general  and provides a rich machinery 
developed by generations of researchers.  The machinery includes, for example, deep connections between expressiveness and computational complexity.
In addition, the notion of a structure is important in KR as it abstractly 
represents our understanding of the world.


We believe that, despite common goals, the interaction between the CP community and various solver communities on one hand
and the KR community is insufficient, and that foundations in model theory can make the interaction much more 
easy and fruitful.

{\bf Analyzing other KR systems} Just as in the case of single-module system where we can use the 
purely semantical framework of model expansion, we can use the framework of Modular Systems 
to analyze multi-language KR formalisms and to study  the expressive power of modular systems.

The modular framework generalizes naturally to the case where we need to study languages (logics)
with ``built-in'' operations. In that case, embedded model expansion has to be considered,
where the embedding is into an infinite structure interpreting, e.g., built-in arithmetical 
operations \cite{TM:IJCAI:2009,TT:arithmetic:LPAR-17}.

\subsubsection{Operational  View}

Due to structural operational semantics, a new type of behaviour equivalence (bisimulation)
can be defined on complex modules (e.g. represented by ASP programs). 
The  operational view  enables us to obtain results about our modular systems such as approximability of a
sub-class of modular systems. 
While this operational view is novel and we have not developed it very much, 
we believe that  this view allows  one to apply the extensive research on proving properties of transition systems 
and the techniques developed in the situation calculus to prove useful facts about transition systems.
We can do e.g. verification of correct behaviour, static or dynamic, particularly in the presence of arithmetic.
The mathematical abstraction we proposed allows one to approach solving the 
problem of synthesis of modular systems abstractly, similarly to \cite{DBLP:journals/ai/GiacomoPS13} Just as a Golog 
program can be synthesized from  a library of available programs, a modular system
can be synthesized from a library of available solutions to MX tasks.

\subsubsection{Related Work} 

Our work on modularity was initially inspired by \cite{JOJN} who developed a constraint-based modularity formalism, where 
modules were represented by constraints and combined through operations of sequential composition and projection. 
A detailed comparison with that work is given in \cite{TT:FROCOS:2011-long}.

The connections with the  related formalism of Multi-Context Systems (MCSs), see \cite{Brewka-2007:long} 
and consequent papers, has been formally studied in \cite{Shahab-thesis}
and \cite{TT:KR2014}. We only mention here that while the contexts are very general, and may have 
any semantics, not necessarily model-theoretic, the communication between knowledge bases happens through
rules of a specific kind, that are essentially rules of logic programs with negation as failure.
We,  on the other hand, have chosen to represent communication simply through equality of vocabulary 
symbols, and to develop a model-theoretic algebra of modular systems.

Splitting results in logic programming (ASP) give conditions for 
separating a program  into modules \cite{Splitting:Turner:96,Splitting:Turner:96}.
The results rely on a specific semantics, but can be used for separating programs into modules to represent in our formalism.
The same applies to modularity of inductive definitions  \cite{Denecker/Ternovska:2008:TOCL-long,algebraic-split:VGD,DT:KR2004}.

The Generate-Define-Test parts of Answer Set Programs, as discussed in  
\cite{DBLP:conf/iclp/DeneckerLTV12}, are naturally representable as a sequential composition of the corresponding modules. 

A  recent work is \cite{LT:PADL:2014}, where the authors introduce an abstract 
approach to modular inference systems and solvers was already mentioned, and is used in this paper.

\ignore{ From DeGiacomo et al paper on service compositions G. De Giacomo et al. / Artificial Intelligence 196 (2013) 106Ð142
Most results presented in this paper appeared at an earlier stage in [30,79,15,80,26]. Here we revise, extend, and combine
them into a uniform and in-depth investigation which includes all the technical details and extended examples, so as to provide a fully-comprehensive and clear analysis of the problem and of our solution approach. In particular the technical contributions include:
¥ a notion of composition in the presence of partially controllable behaviors;
¥ a notion of composition in the presence of partially controllable behaviors;
¥ a simulation-based technique working with partially controllable behaviors, which produces ÒuniversalÓ solutions, i.e., ones from which all possible solutions can be generated;
¥ repair procedures to incrementally refine and adapt an existing solution to various unexpected types of failures;
¥ an alternative, equivalent, solution technique based on safety games well suited for model checking based technology; and a proof-of-concept implementation of the latter in the tlv system. The
}

\section{The Algebra of Modular Systems}
\label{sec:modular}

\ignore{ 
The initial
development was motivated by \cite{JOJN}, however MS framework offers two
equivalent semantics based on model-theoretic and operational views (none of
them present in \cite{JOJN}). Also, as we explain in this section, MS framework
extends that earlier work in several significant ways and, in particular, by
adding an expressive feedback operator. In the following, we first formally
define an algebraic language for modular systems and then two (equivalent)
semantics for this language. The first semantics is model-theoretical and
recursively defines the set of structures that a modular system abstractly
represents. The second semantics is called the operational semantics and associates
a non-deterministic operator to each modular system. We prove that these two
semantics coincide (i.e., the second semantics associates an operator to a
modular system whose fixpoints are closely tied to the structures that are
associated to the modular system by the first semantics). This way, we have two
distinct methods to study the properties of modular systems, and, as we shall
see, both these methods are advantageous in certain situations.
} 

 {\em Each modular system abstractly represents
an MX task}, i.e., a set (or class) of structures over some instance (input) and
expansion (output) vocabulary. Intuitively, a modular system is described as a set
of primitive modules (individual MX tasks) combined using the operations of: 
\begin{compactenum}
  \item Projection($\pi_\nu(M)$) which restricts the vocabulary of a module.
Intuitively, the projection operator on $M$ defines a modular system that acts
as $M$ internally but where  some
vocabulary symbols are hidden from the outside.
  \item Composition($M_1 \rhd M_2$) which connects outputs of $M_1$ to inputs of
$M_2$. As its name suggests, the composition operator is intended to take two
modular systems and defines a multi-step operation by serially composing $M_1$
and $M_2$.
  \item Union($M_1 \cup M_2$) which, intuitively, models the case when we have
two alternatives to do a task (that we can choose from).
  \item Feedback($M[R=S]$) which connects output $S$ of $M$ to its inputs $R$.
  As the name suggests, the feedback operator models systems with feedbacks or loops.
  Intuitively, feedbacks  represent fixpoints (not necessarily minimal) of modules viewed as operators,
  since they state that some outputs must be equal to some inputs.
  \item Complementation($\overline{M}$) which does ``the opposite'' of what $M$ does. 
\end{compactenum}
These operations are similar to the operations of Codd's relational algebra, but they work on sets 
of structures instead of relational tables. Thus, our algebra can be viewed as a higher-order 
counterpart of Codd's algebra, with loops.
One can introduce other operations, e.g. as combinations of the ones above.
The algebra of modular systems is formally defined recursively starting from primitive
modules.
\begin{definition}[Primitive Module]\label{def:primitive-mod}
A {\em primitive module} $M$ is a model expansion task (or, equivalently, a
class of structures) with distinct instance  (input)  vocabulary $\sigma$ and expansion (output) vocabulary
$\varepsilon$.  
\end{definition}

A primitive module $M$ can be given, for example, by a 
decision procedure $D_M$  that decides membership in $M$. It can also  be  given by a first- or 
 second-order formula $\phi$. In this case, $M$ is all the models of $\phi$, $M=Mod(\phi)$. 
 It could also be given by an ASP program. In this case, $M$ would be the stable models of the program,
 $M=StableMod(\phi)$.
 
 \begin{remark} \label{remark:MS-vocabularies}
 A module M can be given through axiomatizing it by a formula $\phi$ in some logic ${\cal L}$ such
that $vocab(\phi) = \sigma \cup \varepsilon_a \cup \varepsilon$. That is, $\phi$  may contain auxiliary expansion symbols that are different from
the output symbols $\varepsilon$ of $M$. (It may not even be possible to axiomatize $M$ in that particular logic
${\cal L}$ without using any auxiliary symbols). In this case, we take 
 $M = Mod(\phi)|_{(\sigma \cup \varepsilon)}$, the models of $\phi$ 
restricted to $\sigma \cup \varepsilon$.
 \end{remark}

\begin{example}
 For example, formula $\phi$ of
Example \ref{ex:colouring} describes the model expansion task for the problem of
Graph 3-colouring. Thus, $\phi$ can be the representation of a module $M_{col}$ with
instance vocabulary $\{E\}$ and expansion vocabulary $\{R,G,B\}$. 
\end{example}

Before recursively defining our algebraic language, we have to define composable
and independent modules \cite{JOJN}:
\begin{definition}[Composable, Independent]
Modules $M_1$ and $M_2$ are {\em composable} if $\varepsilon_{M_1} \cap
\varepsilon_{M_2}=\emptyset$ (no output interference). Module $M_2$ is {\em
independent} from $M_1$ if $\sigma_{M_2} \cap \varepsilon_{M_1}=\emptyset$ (no
cyclic module dependencies).
\end{definition}
Independence is needed for the definition of union, both properties, comparability and independence 
are needed for sequential composition, non-empty $\sigma$ is needed for feedback.

\begin{definition}[Well-Formed Modular Systems ($\MS(\sigma, \varepsilon)$)]\label{def:modular-system}
 The {\em set of all well-formed modular systems
$\MS(\sigma, \varepsilon)$} for a given input, $\sigma$, and output, $\varepsilon$, vocabularies is defined as follows.

\begin{enumerate}
  \item[{\bf Base Case, Primitive Modules:}] If $M$ is a primitive module with
instance (input) vocabulary $\sigma$ and expansion (output) vocabulary
$\varepsilon$, then $M \in \MS(\sigma, \varepsilon)$.

  \item[{\bf Projection}] If $M \in \MS(\sigma, \varepsilon)$ and $\tau \subseteq
\sigma \cup \varepsilon$, then $\pi_{\tau}(M) \in \MS(\sigma \cap \tau,
\varepsilon \cap \tau)$.

  \item[{\bf Sequential Composition:}] If $M \in \MS(\sigma, \varepsilon)$, $M'
\in \MS(\sigma', \varepsilon')$, $M$ is composable (no output interference) with
$M'$, and $M$ is independent from $M'$ (no cyclic  dependencies) then $(M \rhd
M') \in \MS(\sigma \cup (\sigma' \setminus \varepsilon), \varepsilon \cup \varepsilon')$.

  \item[{\bf  Union:}] If $M \in \MS(\sigma, \varepsilon)$, $M' \in \MS(\sigma', \varepsilon')$,
$M$ is independent from $M'$, and $M'$ is also independent from $M$ then $(M \cup
M') \in \MS(\sigma \cup \sigma', \varepsilon \cup \varepsilon')$.

  \item[{\bf Feedback:}] If $M \in \MS(\sigma, \varepsilon)$, $R \in \sigma$, $S \in \varepsilon$,
and $R$ and $S$ are symbols of the same type and arity, then $M[R=S] \in \MS(\sigma
\setminus \{R\}, \varepsilon \cup \{R\})$.
\item[{\bf Complementation:}]   If $M \in \MS(\sigma, \varepsilon)$, then $\overline{M} \in \MS(\sigma, \varepsilon )$.  
\end{enumerate}

Nothing else is in the set $\MS(\sigma, \varepsilon)$. 

\end{definition}
Note that the feedback (loop) operator is not defined for the case $\sigma=\emptyset$. However, composition with a module that selects structures where interpretations of two expansion predicates are equal is always possible. The feedback operator was introduced because loops are important in
information propagation, e.g. in all software systems and in solvers (e.g. ILP, ASP-CP, DPLL(T)-based)
\cite{TWT:WLP:2011,TWT:WLP-INAP:2012}. Feedback operation converts an instance predicate to an expansion predicate,
and  equates it to another expansion predicate. Feedbacks are, in a sense, fixpoints, not necessarily minimal\footnote{Modular systems under supported semantics \cite{Shahab-thesis}
allow one to focus on minimal models.}.
They add expressive power to the algebra of modular systems through  introducing additional non-determinism, which is not achieved by equating two 
expansion predicates. We discuss this issue again after the multi-language logic of modular systems 
is introduced. 

\ignore{
\begin{definition}[Modular Systems]\label{def:modular-system}
The set of {\em modular systems with instance/input vocabulary $\sigma$ and
expansion/output vocabulary $\varepsilon$}, denoted by $\MS(\sigma, \varepsilon)$,
is built recursively from primitive modules using projection, composition, union
and feedback operators. $\MS(\sigma, \varepsilon)$ is the unique minimal set
that satisfies the following conditions:

\begin{enumerate}
  \item If $M$ is a primitive module (a set or class of structures over
vocabulary $\sigma \cup \varepsilon$ along with its associated decision
procedure) with instance (input) vocabulary $\sigma$ and expansion (output)
vocabulary $\varepsilon$, then $M \in \MS(\sigma, \varepsilon)$.

  \item If $M \in \MS(\sigma, \varepsilon)$ and $\tau \subseteq \sigma \cup \varepsilon$,
then $\pi_{\tau}(M) \in \MS(\sigma_M \cap \tau, \varepsilon_M \cap \tau)$.

  \item If $M \in \MS(\sigma, \varepsilon)$, $M' \in \MS(\sigma', \varepsilon')$,
$M$ is composable (no output interference) with $M'$, and $M$ is independent
from $M'$ (no cyclic module dependencies) then $(M \rhd M') \in \MS(\sigma \cup
(\sigma' \setminus \varepsilon), \varepsilon \cup \varepsilon')$.

  \item If $M \in \MS(\sigma, \varepsilon)$, $M' \in \MS(\sigma', \varepsilon')$,
$M$ is independent from $M'$, and $M'$ is also independent from $M$ then $(M \cup
M') \in \MS(\sigma \cup \sigma', \varepsilon \cup \varepsilon')$.

  \item If $M \in \MS(\sigma, \varepsilon)$, $R \in \sigma$, $S \in \varepsilon$,
and $R$ and $S$ are symbols of the same type and arity, then $M[R=S] \in \MS(\sigma
\setminus \{R\}, \varepsilon \cup \{R\})$.
\end{enumerate}
For a modular system $M \in \MS(\sigma, \varepsilon)$, we sometimes use $\sigma_M$
to denote $\sigma$ and $\varepsilon_M$ to denote $\varepsilon$.
\end{definition}
}

\noindent The input-output vocabulary of  module $M$  is denoted $vocab(M)$. 
Modules can have ``hidden'' vocabulary symbols, see Remark \ref{remark:MS-vocabularies}.

The description of a modular system (as in Definition \ref{def:modular-system})
gives an algebraic formula representing a system.  {\em
Subsystems} of a modular system $M$ are sub-formulas of the formula that
represents $M$. Clearly, each subsystem of a modular system is a modular system
itself.

\ignore{ 
Further operators for combining modules can be defined as combinations of basic
operators above. For instance, \cite{JOJN} introduced $M_1 \blacktriangleright
M_2$ (composition with projection operator) as $\pi_{\sigma_{M_1} \cup
\varepsilon_{M_2}}(M_1 \rhd M_2)$, i.e., serial composition of $M_1$ and $M_2$
with the intermediate results (generated by $M_1$) forgotten. Also, for $M_1 \in
\MS(\sigma_1, \varepsilon_1)$ and $M_2 \in \MS(\sigma_2, \varepsilon_2)$, $M_1
\cap M_2$ is defined to be equivalent to $M_1 \rhd M_2$ (or $M_2 \rhd M_1$) when
$\sigma_1 \cap \varepsilon_2 = \sigma_2 \cap \varepsilon_1 = \varepsilon_1 \cap
\varepsilon_2 = \emptyset$, i.e., $M_1 \cap M_2$ denotes the composition of two
mutually independent components in a system.
}

\begin{example}[Simple Modular System]\label{example:simple-modular-system}
Consider the following axiomatizations of modules\footnote{In  realistic examples, 
module axiomatizations are much more complex and contain multiple rules or axioms.}, each in the corresponding logic ${\cal L}_i$. 
$$
P_{M_1}: = \{ {\cal L}_{\it WF}: \ a \leftarrow b\},
$$
$$
P_{M_2}: = \{ {\cal L}_{\it WF}: \  a \leftarrow c\},
$$
 $$
P_{M_3}: = \{ {\cal L}_{\it SM}: \  d  \leftarrow not \ a\},
$$
$$
P_{M_4}: = \{  {\cal L}_P: \ b'\lor c'  \equiv  \neg \ d\}.
$$
$ {\cal L}_{\it WF}$ is the logic of logic programs under the well-founded semantics,
$ {\cal L}_{\it SM}$  is the logic of logic programs under the stable model semantics,
$ {\cal L}_{\it P}$ is propositional logic.

The modular system in Figure \ref{fig:simple-modular-system-1} is represented by the following 
algebraic
specification.
$$
M := \pi_{\{a,b,c, d\}}(( ((M_1 \cup M_2)  \rhd M_3) \rhd M_4   ) [c=c'][b=b']).
$$

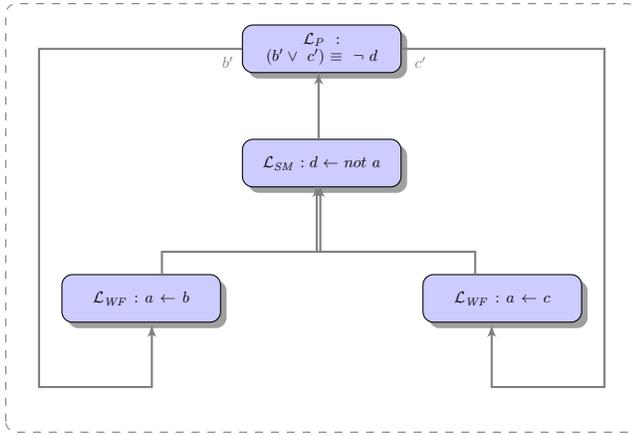
\begin{figure}
  \centering
\begin{tikzpicture}[scale=0.6,transform shape]
 
  
  
 \path \primitivemodule {M4}{${\cal L}_{\it P}:  \ ~ (b'\lor ~ c' ) \equiv ~ \neg \ d$};
 \path (modM4.south)+(0,-2) \primitivemodule{M3}{${\cal L}_{\it SM}:  d  \leftarrow not \ a$};
 \path (modM4.south)+(+4.0,-5) \primitivemodule{M2}{${\cal L}_{\it WF}:  a\leftarrow c$};
 \path (modM4.south)+(-4.0,-5) \primitivemodule{M1}{${\cal L}_{\it WF}:  a\leftarrow b$};
    

  \path [line] ($(modM1.north west)!0.63!(modM1.north east)$) -- +(-0,.5) -- +(3.51,.5) -- +(3.51, 2.) node[left=8pt, above=1pt] {} ;
  \path [line] ($(modM2.north west)!0.33!(modM2.north east)$) -- +(-0,.5) -- +(-3.51,.5) -- +(-3.51, 2) node[left=8pt, above=1pt] {} ;
    \path [line] ($(modM3.north west)!0.48!(modM3.north east)$) -- +(0, 1.5) node[right=8pt, above=1pt] {} ;
    
      \path [line] (modM4.west) node[left=9pt, below=1pt]{$b'$} -- +(-1,0) -- +(-4.5,0) -- +(-4.5, -7.5) -- +(-2, -7.5) -- +(-2, -6.1) node[left=9pt, above=1pt]{};
      
       \path [line] (modM4.east) node[right=12pt, below=1pt]{$c'$} -- +(1,0) -- +(4.5,0) -- +(4.5, -7.5) -- +(2, -7.5) -- +(2, -6.1) node[left=9pt, above=1pt]{};

  \begin{pgfonlayer}{background}
    \coordinate (A) at (-7, -8.5);
    \coordinate (B) at (7, 1);
    \path[rounded corners, draw=black!50, dashed]
   (A) rectangle (B);
 \end{pgfonlayer}

\end{tikzpicture}
  \caption{A simple modular system where modules are axiomatized in different languages.}\label{fig:simple-modular-system-1}
\end{figure}

 Module $M':= ( ((M_1 \cup M_2)  \rhd M_3) \rhd M_4   )$ has $\sigma_{M'}=  \{b,c\}$, $\varepsilon_{M'} = \{ a,b',c',d\}$.
 After adding feedbacks, we have $M'':=  M'  [c=c'][b=b']$, which turns instance symbols $b$ and $c$ into 
 expansion symbols, so we have $\sigma_{M''}=  \emptyset$ and $\varepsilon_{M''} = \{ a,b,c,b',c',d\}$, 
 and in addition, the interpretations of $c$ and $c'$,  and $b$ and $b'$ must coincide. 
 Finally, projection hides $c'$ and $b'$.

Module $M$ corresponds to the whole modular
system denoted by the box with dotted borders.  Its input-output vocabularies are as follows:  $\sigma_M= \emptyset$, $\varepsilon_M = \{ a,b,c,d\}$, $b'$ and $c'$ 
 are ``hidden'' from the outside. They are auxiliary expansion symbols, see Remark \ref{remark:MS-vocabularies}.

Modules $(M_1 \cup M_2)$ and  $M_3$ in this example are composable 
 (no output interference) and independent (no cyclic  dependencies),
$M_1$ and $M_2$ are independent.


 \end{example}
 
 The paper \cite{TT:FROCOS:2011-long}  contains a more applied example, of a business process planner,
  where each module represents a business partner.

\subsubsection{Multi-Language Logic of Modular Systems}
It is possible to introduce a multi-language logic of modular systems,
where formulas of different languages are combined
using conjunctions\footnote{It will be clear from the semantics that the operation $\rhd$ is commutative.} (standing for $\rhd$), disjunctions ($\cup$), existential second-order quantification ($\pi_\nu$), etc.
For example, model expansion for the following formula 
$$
\begin{array}{l}
\phi_M:= \exists b' \exists c' (((
\{ {\cal L}_{\it WF}: \ a \leftarrow b\}
\lor \{ {\cal L}_{\it WF}: \  a \leftarrow c\}) \\
\land  \{ {\cal L}_{\it SM}: \  d  \leftarrow not \ a\}  \land  \{ {\cal L}_{\it P}: \  d  \leftarrow not \ a\} )\\
 \land [ b=b' \land c=c'].
\end{array}
$$
with $\sigma_M = \emptyset $   and  $\varepsilon = \{ a,b,c,d \}$ and ``hidden'' (auxilliary, see Remark 
\ref{remark:MS-vocabularies})  vocabulary $\varepsilon_a = \{b',c'\}$
corresponds to  the modular system in Figure \ref{fig:simple-modular-system-1} from Example \ref{example:simple-modular-system}.

Feedback is a meta-logic operation that does not have a counterpart among logic connectives. Feedback does not exist for model generation ($\sigma=\emptyset$)
and increases  the number of symbols in the expansion vocabulary. In our example, 
 former instance symbols ($b$ and $c$ in this case) become expansion symbols, and become equal to 
 the outputs $b'$ and $c'$ thus forming loops.

Note also that projections (thus quantifiers) over  variables ranging over domain objects can be achieved if such variables are 
considered to be a part of the vocabularies of modules. 
In this logic, the full version of ID-logic, for example, would correspond to the case without feedbacks and all modules limited to either those axiomatized in first-order logic or definitions under well-founded semantics.
A formal study of such a multi-language logic in connection with existing KR formalisms (such as, e.g.  ID-logic, combinations such as  ASP and Description logic. etc.) is left as a future research direction.

Note that if all modules are axiomatized in second-order logic, our task is just 
model expansion for classic second-order logic that is naturally expressible by adding existential second-order quantifiers at the front. If there are multiple languages, we can 
talk about the {\em complexity of model expansion for the combined formula (or modular system) 
as a function of the expressiveness of the individual languages}, which is a study of practical importance.

\section{Model-Theoretic Semantics}
\label{sec:modular}

So far, we introduced the syntax of the algebraic
language using the notion of a well-formed modular system. Those  are
primitive modules (that are sets of structures) or are constructed inductively by the algebraic operations 
of composition, union, projection, loop. Model-theoretic semantics  associates, with each modular system, 
a set of structures. Each such structure is called a {\em model} of that modular
system. Let us assume that the domains of all modules are included in a (potentially infinite) universal domain $U$.
\begin{definition}[Models of a Modular System]
Let $M \in \MS(\sigma, \varepsilon)$ be a modular system and $\cB$ be a $(\sigma
\cup \varepsilon)$-structure. 
We construct the set $M^{mt}=Mod(M)$ of models of module $M$ under model-theoretic semantics recursively,
by structural induction on the structure of a module.
\begin{compactenum}
  \item[{\bf  Base Case, Primitive Module:}]\hspace{-1mm} $\cB$~is~a~model~of~$M$~if~$\cB\in M$.
  \item[{\bf Projection:}] $\cB$ is a model of $M := \pi_{(\sigma \cup
\varepsilon)}(M')$ (with $M' \in \MS(\sigma', \varepsilon')$) if  a
$(\sigma' \cup \varepsilon')$-structure $\cB'$ exists such that $\cB'$ is a
model of $M'$ and $\cB'$ expands $\cB$.
  \item[{\bf Composition:}] $\cB$ is a model of $M := M_1 \rhd M_2$ (with $M_1
\in \MS(\sigma_1, \varepsilon_1)$ and $M_2 \in \MS(\sigma_2, \varepsilon_2)$) if
 $\cB|_{(\sigma_1 \cup \varepsilon_1)}$ is a model of $M_1$ and
$\cB|_{(\sigma_2 \cup \varepsilon_2)}$ is a model of $M_2$.
  \item[{\bf Union:}] $\cB$ is a model of $M:=M_1 \cup M_2$ (with $M_1 \in \MS(
\sigma_1, \varepsilon_1)$ and $M_2 \in \MS(\sigma_2, \varepsilon_2)$) if  either $\cB|_{(\sigma_1 \cup \varepsilon_1)}$ is a model of $M_1$, or
$\cB|_{(\sigma_2 \cup \varepsilon_2)}$ is a model of $M_2$.
  \item[{\bf Feedback:}] $\cB$ is a model of $M:=M'[R=S]$ (with $M' \in \MS(
\sigma', \varepsilon')$) if $R^\cB=S^\cB$
and $\cB$ is model of $M'$.
\item[{\bf Complementation:}] $\cB$ is a model of $M:=\overline{M'}$ (with $M,M' \in \MS(
\sigma, \varepsilon)$) if and $\cB$ is not a model of $M'$. That is, $\overline{M'}$ denotes the 
complement of $M$ in the set of all possible $\sigma\cup \varepsilon$-structures  over the universal domain $U$. 

\end{compactenum}
Nothing else is a model of $M$.
\end{definition}
Note that, by this semantics, sequential composition is a commutative operation (we could have used $\Join$ notation), however the direction of 
information propagation  is uniquely given by the separations of the input and output vocabularies. 
Notice that it's not possible to compose
two modules in two different ways. 
If it was possible, then in the compound module 
we would had that the intersection
of the input and the output vocabularies would
not be empty, and this is not allowed.
So, we prefer to use
 $\ \rhd$ instead of  $\Join$ for both historic and mnemonic reasons, and encourage the reader to 
 write algebraic formulas in a way that corresponds to their visualizations of the corresponding modular systems.

An example illustrating the semantics of the feedback operator, as well as non-determinism introduced by this operator is given in the appendix.

The task of model
expansion for modular system $M$ takes a $\sigma$-structure $\cA$ and finds (or reports that
none exists) a $(\sigma \cup \varepsilon)$-structure $\cB$ that expands $\cA$
and is a model of $M$. Such a structure $\cB$ is a {\em solution of $M$ for
input $\cA$}.

\begin{remark}
The  semantics  does not put any
finiteness restriction on the domains of structures. Thus, 
the framework works for modules with infinite structures.
\end{remark}

\ignore{
\noindent \textbf{Comparison with \cite{JOJN} }  {\tt This will be replaced with a summary, probably in the introduction.}
The framework \cite{JOJN} is based on a set of variables ${\cal X}$ with each
$x \in {\cal X}$ having a domain $D(x)$. An {\em assignment} over a subset of
variables $X \subseteq {\cal X}$ is a function $\sigma : X \rightarrow \cup_{x
\in X} D(x)$, which maps variables in $X$ to values in their domains, i.e.,
$\sigma(x) \in D(x)$ for all $x \in X$. A constraint $C$ over a set of variables
$X$ is characterized by a set $C$ of assignments over $X$, called the {\em
satisfying assignments}. A primitive module in terms of \cite{JOJN} is a
constraint plus a signature for its input and outputs, i.e., two disjoint sets
$I, O \subseteq X$ of variables. Modular systems in \cite{JOJN} are combined
using operators of composition and projection, i.e., there is no union or
feedback operators. In this paper, the variables of \cite{JOJN} are represented
by the vocabulary symbols in $\sigma \cup \varepsilon$. Moreover, instead of
assigning values, we use structures that assign interpretations to vocabulary
symbols. Therefore, unlike \cite{JOJN} that is concerned with the task of
satisfiability checking (i.e., it tries to find an assignment that satisfies all
the constraint), we are concerned with the task of model expansion that is
suitable for modeling the whole system in a modular fashion (and not just one
input of the system).
}
\ignore{
Our notion of a constraint module  can
be viewed as a special case of a constraint \cite{JOJN}, provided they allow domains
which consist of sets of sets, such as the power set of a given universe, and sets of sets of tuples. In \cite{JOJN}, constraint modules are defined as assignments in contrast to
sets of structures. 
}

\ignore{
{\em Our next goal is to give a method to solve the MX task for a given modular
system, i.e., given a modular system $M$ and structure $\cA$, find $\cB$ in $M$
which expands $\cA$.} We find our inspiration in existing solver architectures
by viewing them at a high level of abstraction.
}



\section{Structural Operational  Semantics}
\label{sec:fixpoint-semantics}

In this section, we introduce a novel Structural Operational Semantics of modular systems.

We now  focus on potentially infinite  all-inclusive   vocabulary   $\tau$ that subsumes the vocabularies of all  modules considered. 
Thus, we always have $vocab(M) \subseteq \tau$.
\begin{definition}[State of a Modular Systems]
A $\tau$-state of a modular system $M \in \MS(\sigma, \varepsilon)$ is  a $\tau$-structure such that $(\sigma \cup
\varepsilon) \subseteq \tau$.
\end{definition}
\ignore{ 
With each modular system $M$, we associate a non-deterministic operator $ \modop{M}$ on  $\tau$-states. 
We use notation $\cB_1\modop{M}\cB_2$ (or,
equivalently, $\cB_2 \in  \modop{M} (\cB_1)$), to say that 
 {\em $\cB_2$ is a result of applying (non-deterministic) operator $M$ on
$\cB_1$}. 

Operational semantics of $M$ is given by the fixpoint of this operator. 
To help the reader to see the direction of our exposition, we proceed top-town, giving the main notion first.
\begin{definition}[Operational Semantics]\label{def:MS-operators}
Let $M \in \MS(\sigma, \varepsilon)$ and let
$\cB$ be a $\tau$-state of $M$. 
We say that
$\cB|_{vocab(M)}$ is a {\em model} of $M$ if  $\cB  \modop{M}\cB$,   that is $\cB$ a fixpoint of $M$.
\end{definition}

\begin{definition}[Modules as Non-Deterministic Operators]\label{def:modules-operators}
Let $M \in \MS(\sigma, \varepsilon)$. Relation  ``$\cB_2$ is a result of applying $\modop{M}$ on $\cB_1$'', denoted $\cB_1\modop{M}\cB_2$, (for $\tau$-states $\cB_1$ and $\cB_2$) is defined as follows.
\begin{compactitem}
  \item[{\bf Base Case, Primitive Modules:}] $\cB_1 \modop{M} \cB_2$ if \\
   $\cB_2|_{\sigma \cup \varepsilon} \in M$ and $\cB_2|_{(\tau \setminus
\varepsilon)}=\cB_1|_{(\tau \setminus \varepsilon)}$\footnote{In particular, it
means that $\cB_1$ and $\cB_2$ have the same domain.},
  \item[{\bf Projection:}] If $M := \pi_{(\sigma \cup \varepsilon)}(M')$ (with
$M' \in \MS(\sigma', \varepsilon')$) then $\cB_1 \modop{M} \cB_2$ if 
$(\sigma' \cup \varepsilon')$-structures $\cB'_1$ and $\cB'_2$ exist s.t.

  \begin{compactenum}
    \item $\cB'_1|_{(\sigma \cup \varepsilon)}=\cB_1|_{(\sigma \cup \varepsilon)}$
(expanding $\cB_1$ with interpretations of projected-out symbols gives $\cB'_1$),
    \item $\cB'_1\modop{M'}\cB'_2$ (applying $M'$ on $\cB'_1$ gives $\cB'_2$),
    
    \item $\cB'_2|_{(\sigma \cup \varepsilon)}=\cB_2|_{(\sigma \cup \varepsilon)}$
(projecting $\cB'_2$ on $\sigma \cup \varepsilon$ gives $\cB_2$), and,
    \item $\cB_1|_{\tau \setminus (\sigma \cup \varepsilon)}=\cB_2|_{\tau \setminus
(\sigma \cup \varepsilon)}$ ($M$ can only affect its vocabulary).
  \end{compactenum}
  \item[{\bf Composition:}] If $M := M_1 \rhd M_2$ (with $M_1 \in \MS(\sigma_1,
\varepsilon_1)$ and $M_2 \in \MS(\sigma_2, \varepsilon_2)$) then $\cB_1 \modop{M}
\cB_2$ if  $\tau$-structure $\cB'$ exists such that $\cB_1 \modop{M_1}
\cB'$ and $\cB' \modop{M_2}\cB_2$,
  \item[{\bf Union.}] If $M:=M_1 \cup M_2$ (with $M_1 \in \MS(\sigma_1,
\varepsilon_1)$ and $M_2 \in \MS(\sigma_2, \varepsilon_2)$) then $\cB_1 \modop{M}
\cB_2$ if either $\cB_1\modop{M_1}\cB_2$ or $\cB_1 \modop{M_2} \cB_2$,
  \item[{\bf Feedback:}] If $M:=M'[R=S]$ (with $M' \in \MS(\sigma',
\varepsilon')$) then $\cB_1 \modop{M} \cB_2$ if 
natural number $n \geq 1$ and $\tau$-structures $\cB^0_1, \cdots, \cB^n_1$ and
$\cB^0_2, \cdots,\cB^{n-1}_2$ exist such that:
$$
\begin{array}{l}
\cB^0_1 = \cB_1, \cB^n_1=\cB_2,\\
\cB^i_1\modop{M'}\cB^i_2 \mbox{ for all $i$ such that $0 \leq i < n$},\\
\cB^{i+1}_1|_{(\tau \setminus \{R\})}=\cB^i_2|_{(\tau \setminus \{R\})} \mbox{
for all $i$ such that  $0 \leq i < n$},\\
R^{\cB^{i+1}_1}=S^{\cB^i_2} \mbox{ for all $i$ such that $0 \leq i < n$},\\
R^{\cB^n_1} = S^{\cB^n_1}.
\end{array}
$$
\end{compactitem}
Nothing else is in the relation $\cB_1\modop{M}\cB_2$.
\end{definition}

\begin{figure}
\begin{center}
\begin{tikzpicture}[->,>=stealth',shorten >=1pt,auto,node distance=2cm,
  thick,main node/.style={circle,draw,font=\sffamily}]

  \node[main node] (B1) {$\cB_1$};
  \node[main node] (Bp1) [right of=B1] {$\cB'_1$};
  \node[main node] (Bp2) [right of=Bp1] {$\cB'_2$};
  \node[main node] (B2) [right of=Bp2] {$\cB_2$};

  \path [line] (B1) -- (Bp1) node [above, midway] {expand};
  \path [line] (Bp1) -- (Bp2) node [above, midway] { $M'$};
  \path [line] (Bp2) -- (B2) node [above, midway] {project};
\end{tikzpicture}
\end{center}
\caption{Module $M:=\pi_\tau(M')$ operates by (a) expanding vocabulary of input
$\cB_1$, (b) applying $M'$ to expanded input, and, (c) projecting the result of
application of $M'$.}\label{fig:MS-proj-operator}
\end{figure}

\begin{figure}
\begin{center}
\begin{tikzpicture}[->,>=stealth',shorten >=1pt,auto,node distance=1.6cm,
  thick,main node/.style={circle,draw,font=\sffamily}]

  \node[main node] (B1) {$\cB_1$};
  \node[main node] (B01) [right of=B1] {$\cB^0_1$};
  \node[main node] (B02) [below of=B01] {$\cB^0_2$};
  \node[main node] (B11) [right of=B01] {$\cB^1_1$};
  \node[main node] (B12) [below of=B11] {$\cB^1_2$};
  \node[main node, dotted] (Bm1) [right of=B11] {$\cB^i_1$};
  \node[main node, dotted] (Bm2) [below of=Bm1] {$\cB^i_2$};
  \node[main node] (Bn1) [right of=Bm1] {$\cB^n_1$};
  \node[main node] (B2) [right of=Bn1] {$\cB_2$};

  \path [line] (B1) -- (B01) node [above, midway] {copy};
  \path [line] (B01) -- node [sloped, anchor=center, below] { $M'$} (B02);
  \path [line] (B02) -- node [sloped, anchor=center, above] {$R := S$} (B11);
  \path [line] (B11) -- node [sloped, anchor=center, below] { $M'$} (B12);
  \path [line] (B12) -- node [sloped, anchor=center, above] {$R := S$} (Bm1);
  \path [line,dotted] (Bm1) -- node [sloped, anchor=center, below] {$M'$} (Bm2);
  \path [line] (Bm2) -- node [sloped, anchor=center, above] {$R := S$} (Bn1);
  \path [line] (Bn1) -- (B2) node [above, midway] {copy};
\end{tikzpicture}
\end{center}
\caption{Module $M:=M'[R=S]$ operates by repeatedly applying $M'$ on the given
structure and copying interpretation of $S$ into interpretation of $R$ until it
reaches a fixpoint of $M'$.}\label{fig:MS-feedback-operator}
\end{figure}

} 

The semantics we give  is structural
because, for example, the meaning of the sequential composition, $M_1 \rhd M_2$,
is defined through the meaning of $M_1$ and the meaning of $M_2$. 

\begin{definition}[Modules as Operators]
 We say that a well-formed modular system  $M$ (non-deterministically) maps $\tau$-state $\cB_1$ to $\tau$-state $\cB_2$,  notation $(M, \cB_1) \longrightarrow
\cB_2$,   if we can apply the rules of the structural
operational semantics (below) starting from this expression and arriving to $true$.
In that case, we say that transition  $(M, \cB_1) \longrightarrow
\cB_2$ is {\em derivable}.
\partitle{Primitive modules}  $M$:
$$
\frac{(M, \cB_1) \longrightarrow \cB_2}{true}\mbox{ if } \cB_2|_{(\sigma \cup
\varepsilon)} \in M \mbox{ and } \cB_2|_{(\tau \setminus \varepsilon)}=
\cB_1|_{(\tau \setminus \varepsilon)}.
$$
We proceed by induction on the structure of modular system $M$.
\partitle{Projection}  $\pi_\nu(M)$:
$$
\frac{(\pi_\nu(M), \cB_1) \longrightarrow \cB_2}{(M, \cB'_1) \longrightarrow
\cB'_2} \mbox{ if } \cB'_1|_\nu = \cB_1|_\nu \mbox{ and } \cB'_2|_\nu=\cB_2|_\nu.
$$

\partitle{Composition} $M_1 \rhd M_2$:
$$
\frac{(M_1 \rhd M_2, \cB_1) \longrightarrow \cB_2}{(M_1, \cB_1) \longrightarrow
\cB' \mbox{ and } (M_2, \cB') \longrightarrow \cB_2}.
$$

\partitle{Union}  $M_1 \cup M_2$:
$$
\frac{(M_1 \cup M_2, \cB_1) \longrightarrow \cB_2}{(M_1, \cB_1) \longrightarrow
\cB_2},
\ \ \ \ \ \ \ \ \ 
\frac{(M_1 \cup M_2, \cB_1) \longrightarrow \cB_2}{(M_2, \cB_1) \longrightarrow
\cB_2}.$$

\partitle{Feedback} $M[R=S]$:
$$
\frac{(M[R=S], \cB_1) \longrightarrow \cB_2}{(M, \cB_1) \longrightarrow \cB_2}, 
\mbox{ if } R^{\cB_1} = S^{\cB_2}.
$$

\partitle{Complementation} $\overline{M}$:
$$
\frac{(\overline{M}, \cB_1) \longrightarrow \cB_2}{true}  \mbox{ if $(M, \cB_1) \longrightarrow \cB_2$ is not derivable}.
$$
Nothing else is derivable.
\end{definition}

Let us clarify the projection operation $\pi_\nu(M)$.  
 Let $vocab(M) = \sigma'\cup \varepsilon'$, let 
 $\nu=\sigma\cup\varepsilon$, $\sigma \subseteq \sigma'$, $\varepsilon \subseteq \varepsilon'$.
Module $\pi_\nu(M)$, viewed as an operator, is applied to $\tau$-structure $\cB_1$. It (a) expands $\sigma$-part of $\cB_1$ to $\sigma'$  by an arbitrary interpretation over the same domain, and then  (b) applies $M$ to the modified input, (c) projects the result of application of $M$ onto $\varepsilon$, ignoring everything else, (d) the interpretations of $\tau\setminus \epsilon$ are moved from $\cB_1$ by inertia.

\begin{definition}[Operational Semantics] \label{def:operational-semantics}
Let $M$ be a well-formed modular system in  $\MS(\sigma, \varepsilon)$.
The semantics of $M$ is given by the following set.
$$
M^{op} := \{ \cB \  |\   (\cB_1, M) \longrightarrow \cB_2   \mbox{ and }   \cB|_\sigma = \cB_1|_\sigma, \ \cB|_\varepsilon =
\cB_2|_\varepsilon  \}.
$$
\end{definition}
Figure \ref{fig:MS-state-change} illustrates this definition.

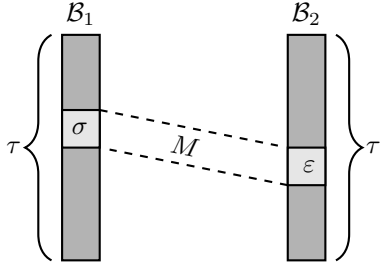
\begin{figure}
\begin{center}
\begin{tikzpicture}[thick]

  \node (R1A) {};
  \path (R1A)+(0.5,-3) node (R1B) {};
  \draw [fill=black!30] (R1A) rectangle (R1B);

  \path (R1A)+(0,-1) node (R1sA) {};
  \path (R1sA)+(0.5,-0.5) node (R1sB) {};
  \draw [fill=black!10] (R1sA) rectangle (R1sB);

  \path (R1A)+(3,0) node (R2A) {};
  \path (R1B)+(3,0) node (R2B) {};
  \draw [fill=black!30] (R2A) rectangle (R2B);

  \path (R1sA)+(3,-0.5) node (R2eA) {};
  \path (R1sB)+(3,-0.5) node (R2eB) {};
  \draw [fill=black!10] (R2eA) rectangle (R2eB);

  \path (R1sA)+(0.5,0) edge[-,dashed] node[sloped,anchor=center,below]{$M$} (R2eA);
  \path (R2eA)+(0,-0.5) edge[-,dashed] (R1sB);

  \path (R1sA)+(0,-0.25) node[right] {$\sigma$};
  \path (R2eB)+(0,0.25) node[left] {$\varepsilon$};

  \draw [decorate,decoration={brace,amplitude=10pt},xshift=-4pt,yshift=0pt] (0,-3) -- (0,0) node [black,midway,xshift=-0.5cm] {$\tau$};
  \draw [decorate,decoration={brace,amplitude=10pt},xshift=4pt,yshift=0pt] (3.5,0) -- (3.5,-3) node [black,midway,xshift=0.5cm] {$\tau$};

  \path (0.25,0) node [above] {$\cB_1$};
  \path (3.25,0) node [above] {$\cB_2$};
\end{tikzpicture}
\end{center}
\caption{An illustration of Definition \ref{def:operational-semantics}. Module $M \in \MS(\sigma, \varepsilon)$ maps a $\tau$-structure $\cB_1$
(with $ (\sigma \cup \varepsilon)\subseteq \tau$) to a $\tau$-structure $\cB_2$
by changing the interpretation  $\varepsilon$ according
to  $M$ (so that the $\sigma$ part and the new $\varepsilon$ part,
together, form a model of $M$). Interpretation of all other symbols, including
those in $\sigma$, stays the same.
 This is similar to how
frame axioms keep fluents that are not affected by actions unchanged in 
the situation 
calculus. 
}\label{fig:MS-state-change}
\end{figure}
\begin{corollary}
Every result of application of $M$ is its fixpoint. That is, 
for any $\tau$-states $\cB_1$, $\cB_2$, if $(M, \cB_1) \longrightarrow \cB_2$,
then $(M, \cB_2) \longrightarrow \cB_2$.
\end{corollary}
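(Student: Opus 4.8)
The plan is to prove the statement by structural induction on the well-formed modular system $M \in \MS(\sigma, \varepsilon)$, but not to attack the fixpoint property head-on. Instead I would isolate two one-directional lemmas, each proved by the same induction, whose composition yields the corollary. The first is \emph{soundness}: whenever $(M,\cB_1)\longrightarrow\cB_2$ is derivable, the restriction $\cB_2|_{\sigma\cup\varepsilon}$ is a model of $M$, i.e. $\cB_2|_{\sigma\cup\varepsilon}\in M^{mt}$. The second is \emph{reflexivity}: if $\cB|_{\sigma\cup\varepsilon}\in M^{mt}$ then $(M,\cB)\longrightarrow\cB$ is derivable. Granting these, the corollary is immediate: from $(M,\cB_1)\longrightarrow\cB_2$ soundness gives $\cB_2|_{\sigma\cup\varepsilon}\in M^{mt}$, and reflexivity applied to $\cB_2$ then gives $(M,\cB_2)\longrightarrow\cB_2$. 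This is precisely the single-transition fragment of the operational/model-theoretic equivalence, so the corollary can also be read off that equivalence once it is available.

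Both lemmas go through smoothly for the base case and for union and projection. For a primitive $M$ the rule directly gives $\cB_2|_{\sigma\cup\varepsilon}\in M$, and reflexivity is just the rule instantiated at $\cB_1=\cB_2$, where the inertia side condition $\cB|_{\tau\setminus\varepsilon}=\cB|_{\tau\setminus\varepsilon}$ is trivial. For union one selects the disjunct that fired and appeals to the induction hypothesis; for projection the rule hands us auxiliary structures on which the induction hypothesis applies, and the model-theoretic projection clause matches the bookkeeping. The two constructors that need extra care are composition and feedback, and for both I would first establish an auxiliary \emph{inertia} lemma by the same induction: $(M,\cB_1)\longrightarrow\cB_2$ implies $\cB_2|_{\tau\setminus\varepsilon}=\cB_1|_{\tau\setminus\varepsilon}$, i.e. a module alters only its own output symbols. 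For $M_1\rhd M_2$ with intermediate state $\cB'$, inertia of $M_2$ together with composability ($\varepsilon_{M_1}\cap\varepsilon_{M_2}=\emptyset$) and independence ($\sigma_{M_1}\cap\varepsilon_{M_2}=\emptyset$) shows that $\cB_2$ and $\cB'$ agree on $\sigma_{M_1}\cup\varepsilon_{M_1}$, which transports the $M_1$-model property of $\cB'$ to $\cB_2$. For $M[R=S]$ inertia of $M$ gives $R^{\cB_2}=R^{\cB_1}$, because $R\in\sigma'$ lies outside $\varepsilon'$, so the side condition $R^{\cB_1}=S^{\cB_2}$ upgrades to the feedback model condition $R^{\cB_2}=S^{\cB_2}$.

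The main obstacle is complementation. Reflexivity still survives for $\overline{M}$ — if $\cB|_{\sigma\cup\varepsilon}\notin M^{mt}$ then, by soundness for the subsystem $M$, $(M,\cB)\longrightarrow\cB$ cannot be derivable, so $(\overline{M},\cB)\longrightarrow\cB$ is — but soundness and inertia both fail for $\overline{M}$. The rule makes $(\overline{M},\cB_1)\longrightarrow\cB_2$ derivable exactly when $(M,\cB_1)\longrightarrow\cB_2$ is not, and non-derivability can be caused merely by a violation of $M$'s inertia rather than by $\cB_2|_{\sigma\cup\varepsilon}$ failing to model $M$; hence $\cB_2|_{\sigma\cup\varepsilon}$ may still lie in $M^{mt}$, so $\cB_2|_{\sigma\cup\varepsilon}\notin\overline{M}^{mt}$ and soundness breaks. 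Concretely, taking $M$ to be ``$P$ is empty'' over output $\{P\}$ inside a larger $\tau$ containing a frozen symbol $Q$, and choosing $\cB_1,\cB_2$ that both empty $P$ but disagree on $Q$, one finds $(\overline{M},\cB_1)\longrightarrow\cB_2$ derivable while $(\overline{M},\cB_2)\longrightarrow\cB_2$ is not, a genuine counterexample to the corollary as literally stated. I would therefore expect the clean statement to be restricted to the complementation-free fragment, where soundness, inertia and reflexivity all close and the induction goes through uniformly, or else the complementation rule to be amended to build inertia in (demanding $\cB_2|_{\tau\setminus\varepsilon}=\cB_1|_{\tau\setminus\varepsilon}$ together with non-derivability of $(M,\cB_2)\longrightarrow\cB_2$). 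A secondary, purely presentational point is that the projection rule as written omits the inertia side condition stated in the accompanying prose; I would reinstate it so that the inertia lemma holds for projection as well.
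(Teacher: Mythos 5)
Your proposal is correct and is, in substance, a rigorous reconstruction of what the paper does in two sentences: the paper's entire proof is an informal appeal to inertia (``the interpretation of $\sigma$ is transferred from $\cB_1$ to $\cB_2$'' by inertia, and ``the interpretation of $\varepsilon$ is already changed by $M$, [so] nothing is to be changed''). Your decomposition into soundness, reflexivity and an explicit inertia lemma, each established by structural induction, is exactly the content that informal argument presupposes, and your observation that the corollary is the single-transition fragment of Theorem~\ref{thm:fp-mt-equiv} matches how the paper implicitly leans on the equivalence of the two semantics. What your route buys is that it forces the problematic cases into the open, and both of the problems you flag are genuine: the complementation rule derives $(\overline{M}, \cB_1) \longrightarrow \cB_2$ from mere non-derivability of $(M, \cB_1) \longrightarrow \cB_2$, so a transition that fails only the inertia side condition of $M$ becomes a transition of $\overline{M}$ whose target is not a fixpoint of $\overline{M}$; your concrete counterexample (both states empty $P$ but disagree on a frozen $Q$) is valid against the rules as literally written, and the paper's proof simply does not consider this case. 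Likewise the projection rule as printed carries no inertia side condition on $\tau \setminus \nu$, even though the surrounding prose (clause (d)) asserts one, so the inertia lemma needs that condition reinstated before the induction closes. The paper's one-line proof glosses over both points; your version either restricts the claim to the complementation-free fragment or amends the complementation rule to build inertia in, and one of those repairs is in fact necessary for the corollary to hold as stated.
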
 
\begin{proof} By Definition \ref{def:operational-semantics}, because of inertia, the interpretation of $\sigma$ is transferred from 
$\cB_1$ to $\cB_2$. Since the interpretation of $\varepsilon$ is already changed by $M$, nothing is to be changed, and $(M, \cB_2) \longrightarrow \cB_2$. 
\end{proof}


\begin{theorem}[Operational = Model-theoretic Semantics]\label{thm:fp-mt-equiv}
Let $M$ be a well-formed modular system in  $\MS(\sigma, \varepsilon)$. Then, its 
model-theoretic and operational semantics coincide, 
$$
M^{mt}=M^{op}.
$$
\end{theorem}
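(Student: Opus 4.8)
The plan is to prove the two inclusions $M^{op}\subseteq M^{mt}$ and $M^{mt}\subseteq M^{op}$ separately, both by structural induction on the well-formed modular system $M\in\MS(\sigma,\varepsilon)$. The engine for the first inclusion is a soundness/inertia invariant about the transition relation itself: for all $\tau$-states $\cB_1,\cB_2$, if $(M,\cB_1)\longrightarrow\cB_2$ is derivable then (i) $\cB_1|_{\tau\setminus\varepsilon}=\cB_2|_{\tau\setminus\varepsilon}$ (nothing outside the output vocabulary moves) and (ii) $\cB_2|_{\sigma\cup\varepsilon}\in M^{mt}$. I would prove this by following the derivation rules: the primitive case is the side condition of the rule; composition chains two invariants through an intermediate $\cB'$; projection transports the invariant along the expand/project steps; feedback inherits it from the premise $(M,\cB_1)\longrightarrow\cB_2$ and upgrades it using the side condition $R^{\cB_1}=S^{\cB_2}$ together with $R\in\sigma\subseteq\tau\setminus\varepsilon$, which forces $R^{\cB_2}=S^{\cB_2}$ so that the output also satisfies the feedback equation. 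Given this invariant, $M^{op}\subseteq M^{mt}$ is almost immediate: an element of $M^{op}$ is the $(\sigma,\varepsilon)$-interface of some derivable transition, and by (i) together with $\sigma\cap\varepsilon=\emptyset$ that interface coincides with $\cB_2|_{\sigma\cup\varepsilon}$, which lies in $M^{mt}$ by (ii).

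The reverse inclusion $M^{mt}\subseteq M^{op}$ is a completeness statement: given $\cB\in M^{mt}$, I would construct a derivable transition whose interface is $\cB$, exploiting that the definition of $M^{op}$ quantifies existentially over the input state, so that I may choose a convenient $\cB_1$. For a primitive module I take $\cB_1$ agreeing with $\cB$ on $\sigma$ and read off the output from $\cB$; composition is handled by threading, first invoking completeness for $M_1$ to obtain an intermediate $\cB'$ realizing $\cB|_{\sigma_{M_1}\cup\varepsilon_{M_1}}$ with inertia, then feeding $\cB'$ into completeness for $M_2$. To make this threading legitimate the completeness statement must allow prescribing the input on all symbols the module reads, and this is exactly where composability ($\varepsilon_{M_1}\cap\varepsilon_{M_2}=\emptyset$) and independence ($\sigma_{M_1}\cap\varepsilon_{M_2}=\emptyset$) are used: they guarantee both that $\cB'|_{\sigma_{M_2}}=\cB|_{\sigma_{M_2}}$ (the connection symbols carry $M_1$'s output while the external inputs are preserved by inertia) and that applying $M_2$ does not disturb the part of the state already fixed by $M_1$. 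Union is handled by selecting the branch whose model condition holds and setting the ``foreign'' outputs (those in the other module's output vocabulary) through the free input, and complementation follows from the equivalence between non-membership in $M^{mt}$ and non-derivability supplied by the soundness invariant.

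I expect the feedback operator to be the main obstacle. The subtlety is that feedback reclassifies the input symbol $R$ as an output, so $R\in\varepsilon_{M[R=S]}$, yet the operational rule holds $R$ fixed and forces it to equal $S$ in the output; consequently the naive transition-level biconditional ``$(M,\cB_1)\longrightarrow\cB_2$ iff the two states agree off the output vocabulary and the output is a model'' is false, since an input whose $R^{\cB_1}$ is not already the intended $S$-value admits no derivable transition, and the same failure appears for union because the two branches preserve different output symbols. The way I would resolve this is to argue at the interface level rather than at the transition level, and to use the existential choice of input in $M^{op}$: when establishing completeness for $M[R=S]$ with target model $\cB$ (so $R^{\cB}=S^{\cB}$), I pick the witnessing input with $R^{\cB_1}=R^{\cB}$, which simultaneously satisfies the premise for the underlying $M$ and the side condition $R^{\cB_1}=S^{\cB_2}$. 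Finally, for complementation I would make explicit the framework's pervasive inertia (frame) convention, reading the complement rule over inertial pairs only, so that ``not derivable'' matches exactly ``the output restricted to $\sigma\cup\varepsilon$ is not in $M^{mt}$''; flagging this convention is, I believe, essential for the complementation case to close.
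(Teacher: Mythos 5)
Your overall strategy is the same structural induction the paper itself uses, but you supply essentially all of the machinery that the paper's proof leaves out: the published proof only writes out the primitive-module and projection cases (informally, without separating the two inclusions) and explicitly omits composition, union, feedback and complementation. Your decomposition into a soundness/inertia invariant (every derivable transition leaves $\tau\setminus\varepsilon$ untouched and lands on a state whose $\sigma\cup\varepsilon$-restriction is a model) plus an input-prescribing completeness lemma is the right engine: the invariant makes $M^{op}\subseteq M^{mt}$ immediate exactly as you say, and your observation that composability ($\varepsilon_{M_1}\cap\varepsilon_{M_2}=\emptyset$) and independence ($\sigma_{M_1}\cap\varepsilon_{M_2}=\emptyset$) are what let the intermediate state's $M_1$-interface survive the subsequent application of $M_2$ is precisely where those side conditions of Definition~\ref{def:modular-system} earn their keep --- the paper never makes this explicit. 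Your diagnosis of feedback (the transition relation is input-sensitive on the reclassified symbol $R$, so one must argue at the interface level and exploit the existential choice of $\cB_1$ in the definition of $M^{op}$, picking $R^{\cB_1}=R^{\cB}=S^{\cB}$) is correct and is the genuinely non-trivial point of the whole proof.

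The one place your plan does not fully close is complementation, and the residual difficulty is exactly the feedback phenomenon you already diagnosed, resurfacing inside the complement. Restricting the complement rule to inertial pairs is not enough: take $N=M''[R=S]$ and $\cB\in N^{mt}$, and choose an input $\cB_1$ with $\cB_1|_{\sigma_N}=\cB|_{\sigma_N}$ but $R^{\cB_1}\neq S^{\cB}$; this choice is legitimate because $R\notin\sigma_N$ and inertia only constrains $\tau\setminus\varepsilon_N$, which excludes $R$. Then $(N,\cB_1)\longrightarrow\cB_2$ is underivable for the inertial $\cB_2$ with $\cB_2|_{\varepsilon_N}=\cB|_{\varepsilon_N}$, so $\cB$ lands in $\overline{N}^{op}$ even though $\cB\notin\overline{N}^{mt}$. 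In other words, ``not derivable for this particular input'' is strictly weaker than ``the interface is not a model''; to make the complementation case true one must quantify the complement rule universally over admissible inputs, or simply define $\overline{M}^{op}$ as the set-complement of $M^{op}$. This is arguably a defect of the paper's rule rather than of your argument --- the paper omits this case entirely --- but your proof as proposed would not go through for it as stated.
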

The most important consequence of this theorem is that all the
results obtained when modules are viewed as operators, still hold when modules
are viewed as sets of structures (and vice versa). Thus, we may
use either of these semantics.
From now on, by $M$ we mean either one of these sets $M^{mt}$ or $M^{op}$.

 \begin{proof}   We  prove the statement inductively.

\partitle{Base case, primitive module} By definition, model-theoretically, $\cB$ is a model of $M$ if $\cB\in M$.
On the other hand, operationaly,
$$
M^{op} := \{ \cB \  |\   (\cB_1, M) \longrightarrow \cB_2   \mbox{ and }   \cB|_\sigma = \cB_1|_\sigma, \ \cB|_\varepsilon =
\cB_2|_\varepsilon  \},
$$
where
$$
\frac{(M, \cB_1) \longrightarrow \cB_2}{true}\mbox{ if } \cB_2|_{(\sigma \cup
\varepsilon)} \in M \mbox{ and } \cB_2|_{(\tau \setminus \varepsilon)}=
\cB_1|_{(\tau \setminus \varepsilon)}.
$$
Thus, $\cB\in M$, and the two semantics coincide for primitive modules.

Our inductive hypothesis is that the statement of the theorem 
holds for $M_1$, $M_2$ and $M'$. We proceed inductively. 

\partitle{Projection}  $M:=\pi_\nu(M')$. By the hypothesis, $(M')^{mt}=(M')^{op}$, where $(M')^{op}$ is constructed ``from pieces'',
$
(M')^{op} := \{ \cB' \  |\   (\cB'_1, M') \longrightarrow \cB'_2   \mbox{ and }   \cB|_\sigma = \cB'_1|_\sigma, \ \cB|_\varepsilon =
\cB'_2|_\varepsilon  \}.
$
We apply the rule
$$
\frac{(\pi_\nu(M'), \cB_1) \longrightarrow \cB_2}{(M', \cB'_1) \longrightarrow
\cB'_2} \mbox{ if } \cB'_1|_\nu = \cB_1|_\nu \mbox{ and } \cB'_2|_\nu=\cB_2|_\nu
$$
and obtain that $(\pi_\nu(M'), \cB_1) \longrightarrow \cB_2$ where  $\cB_1$ and $\cB_2$ are just like $\cB'_1$ and $\cB'_2$ on the vocabulary $\nu$.  Now, $M:= \pi_\nu(M')$ is constructed ``from $\sigma$ and $\varepsilon$ pieces'' of $\cB_1$ and $\cB_2$, respectively (where $\nu = \sigma \cup \varepsilon$):
$$
M^{op} := \{ \cB \  |\   (\cB_1, M) \longrightarrow \cB_2   \mbox{ and }   \cB|_\sigma = \cB_1|_\sigma, \ \cB|_\varepsilon =
\cB_2|_\varepsilon  \},
$$

On the other hand, model-theoretically, $\cB$ is a model of $M := \pi_{(\sigma \cup
\varepsilon)}(M')$ (with $M' \in \MS(\sigma', \varepsilon')$) if  a
$(\sigma' \cup \varepsilon')$-structure $\cB'$ exists such that $\cB'$ is a
model of $M'$ and $\cB'$ expands $\cB$, which makes the two semantics equal for projection, $(M)^{mt}=(M)^{op}$.

\ignore{
\partitle{Composition} $M:= M_1 \rhd M_2$. By the hypothesis, $M_1^{mt}=M_1^{op}$ and $M_2^{mt}=M_2^{op}$.
...... {\tt NOT FINISHED}

By the definition of operational semantics, if $(M_1, \cB) \longrightarrow
\cB$ and $(M_2, \cB) \longrightarrow \cB$ are derivable, then so is $(M_1 \rhd M_2, \cB) \longrightarrow
\cB$ .

 $M_1 \rhd M_2, \cB_1$,
if 
$$
\frac{(M_1 \rhd M_2, \cB_1) \longrightarrow \cB_2}{(M_1, \cB_1) \longrightarrow
\cB' \mbox{ and } (M_2, \cB') \longrightarrow \cB_2}.
$$
Model-theoretically, $\cB$ is a model of $M := M_1 \rhd M_2$ (with $M_1
\in \MS(\sigma_1, \varepsilon_1)$ and $M_2 \in \MS(\sigma_2, \varepsilon_2)$) if
 $\cB|_{(\sigma_1 \cup \varepsilon_1)}$ is a model of $M_1$ and
$\cB|_{(\sigma_2 \cup \varepsilon_2)}$ is a model of $M_2$.

\partitle{Union}  $M_1 \cup M_2$:
$$
\frac{(M_1 \cup M_2, \cB_1) \longrightarrow \cB_2}{(M_1, \cB_1) \longrightarrow
\cB_2},
\ \ \ \ \ \ \ \ \ 
\frac{(M_1 \cup M_2, \cB_1) \longrightarrow \cB_2}{(M_2, \cB_1) \longrightarrow
\cB_2}.$$

Model-theoretically, $\cB$ is a model of $M:=M_1 \cup M_2$ (with $M_1 \in \MS(
\sigma_1, \varepsilon_1)$ and $M_2 \in \MS(\sigma_2, \varepsilon_2)$) if  either $\cB|_{(\sigma_1 \cup \varepsilon_1)}$ is a model of $M_1$, or
$\cB|_{(\sigma_2 \cup \varepsilon_2)}$ is a model of $M_2$.

\partitle{Feedback} $M[R=S]$:
$$
\frac{(M[R=S], \cB_1) \longrightarrow \cB_2}{(M, \cB_1) \longrightarrow \cB_2}, 
\mbox{ if } R^{\cB_1} = S^{\cB_2}.
$$

Model-theoretically, $\cB$ is a model of $M:=M'[R=S]$ (with $M' \in \MS(
\sigma', \varepsilon')$) if $R^\cB=S^\cB$
and $\cB$ is model of $M'$.

}
We omit the proofs for the other inductive cases. 
\end{proof}

\subsubsection{Applications of Operational  View}

We now discuss how the operational semantics can be used.
For example, we can consider modular systems at various levels of granularity. We might be interested in the following question: if $M$ gives a transition from a structure $\cB$ to structures $\cB'$, then what are the transitions given by the subsystems of $M$? While answering this question in its full generality is algorithmically impossible, we may study the question of whether a particular transition by a subsystem exists. To answer it, one has to start from the system and build down to the subsystem using the rules of the structural operation semantics.
Reasoning about subsystems of a modular system can be useful in business process modelling. Suppose a 
particular transition should hold for the entire process. This might be the global task of an organization. 
In order to make that transition, the subsystems have to perform their own transitions. Those transitions are derivable using
the rules of structural operational semantics. 


\subsubsection{Complexity}
In the following proposition, we assume a standard encoding of structures as binary strings ) as is common in Descriptive complexity \cite{Immerman}. Note that if $M$ is deterministic, it is polytime in the size of the encoding of the input structure. This is because the domain remains the same, the arities of the relations in $\varepsilon$ are fixed, so we need $(n^k)$ 
steps to construct new interpretations of $\varepsilon$, and move the remaining relations.
\begin{proposition}
Let $M$ be a module that performs a (deterministic) polytime computation. 
Projection $\pi_{\nu}(M) $ increases the complexity of $M$ from P to NP. More generally, for an operator $M$ on the $k$-th level of the Polynomial Time hierarchy (PH), projection can increase the complexity of $M$ from  $\Delta^P_k$ to  $\Sigma^P_{k+1}$.
\end{proposition}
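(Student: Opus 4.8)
The plan is to establish the claim in two halves: a containment (upper bound) showing that projection raises the complexity by at most one existential alternation, together with a hardness witness showing the raise is actually attained, so that the word ``increases'' is justified.

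For the containment I would unfold the model-theoretic definition of projection. Writing $\nu = \sigma \cup \varepsilon$ and $vocab(M) = \sigma' \cup \varepsilon' \supseteq \nu$, a structure $\cB$ over $\nu$ lies in $\pi_\nu(M)$ iff there is an expansion $\cB'$ of $\cB$ that interprets the hidden symbols $(\sigma' \cup \varepsilon') \setminus \nu$ over the same domain and satisfies $\cB' \in M$. The key quantitative observation is that such a witness $\cB'$ has size polynomial in $|\cB|$: projection keeps the domain fixed, each hidden symbol is a relation or function of some fixed arity $r$ and so has an interpretation of size at most $|dom(\cB)|^{r}$, and there are only finitely many hidden symbols. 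Hence deciding membership in $\pi_\nu(M)$ amounts to ``guess a polynomial-size $\cB'$ and then run the decision procedure of $M$ on $\cB'$.'' When $M$ is a deterministic polynomial-time module, the verification step is polynomial, so the whole procedure is a nondeterministic polynomial-time computation, i.e. $\pi_\nu(M)$ is in $\mathrm{NP}$; this is the base case.

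For the general case I would relativize the same argument. If $M$ sits on the $k$-th level of $\mathrm{PH}$, its membership test runs in polynomial time with a $\Sigma^P_k$ oracle, i.e. $M$ is decided in $\Delta^P_k = \mathrm{P}^{\Sigma^P_k}$. Prefixing the polynomially-bounded existential guess of $\cB'$ turns this into a \emph{nondeterministic} polynomial-time computation relative to the same oracle, that is $\mathrm{NP}^{\Sigma^P_k} = \Sigma^P_{k+1}$, which gives the stated upper bound. To see that the increase is genuinely realized I would exhibit concrete witnesses. For the base case, take the 3-colouring module $M_{col}$ of Example~\ref{ex:colouring}: testing whether a given $\cB$ over $\{E,R,B,G\}$ lies in $M_{col}$ is polynomial, yet $\pi_{\{E\}}(M_{col})$ is exactly the class of 3-colourable graphs, an $\mathrm{NP}$-complete problem; this is precisely the ``conversion'' of a deterministic polytime program into one solving an $\mathrm{NP}$-complete task promised in the contributions. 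For level $k$, I would use a $\Sigma^P_{k+1}$-complete quantified Boolean formula $\exists X_1\,\Psi(X_1)$ with $\Psi \in \Pi^P_k$: let $M$ be the module that, given an interpretation of $X_1$ among its inputs, decides the $\Pi^P_k$ matrix $\Psi$ (a $\Delta^P_k$ test, since $\Pi^P_k \subseteq \mathrm{P}^{\Sigma^P_k}$), and then project $X_1$ out; the projection decides the full $\Sigma^P_{k+1}$-complete formula.

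The routine part is the containment: the witness-size bound and the relativized guess-and-check are standard once the projection semantics is unfolded. The main obstacle is the hardness direction at arbitrary level $k$: one must encode a $\Sigma^P_{k+1}$-complete QBF as an $\mathrm{MX}$ membership problem, verify that evaluating the inner $\Pi^P_k$ matrix (with the outermost block supplied as input) is genuinely a level-$k$ module in our sense, and check that projecting out the outermost existential block yields exactly the intended complete problem under the descriptive-complexity encoding. Making these quantifier and oracle levels line up precisely, rather than merely within one level, is where the care is needed.
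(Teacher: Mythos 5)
Your proposal is correct and, on the hardness side, follows essentially the same route as the paper: the paper's entire proof is a single witness in which $M$ is a deterministic polytime module whose input vocabulary contains both the problem instance and a certificate-like part, and projection hides the latter so that $\pi_\nu(M)$ existentially guesses it and thereby decides an NP-complete problem; your level-$k$ construction (decide the $\Pi^P_k$ matrix with the outermost existential block supplied as input, then project that block out) is exactly this scheme relativized, which the paper only asserts ``generalizes to all levels of PH.'' What you add beyond the paper is the matching containment: the observation that a hidden-symbol expansion has size polynomial in the domain (fixed arities, fixed domain), so projection costs at most one polynomially bounded existential quantifier, giving $\pi_\nu(M) \in \mathrm{NP}^{\Sigma^P_k} = \Sigma^P_{k+1}$. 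The paper never argues this upper bound, so your version actually justifies the word ``to'' in ``from $\Delta^P_k$ to $\Sigma^P_{k+1}$'' rather than only ``at least.'' Two small points of care: your base-case witness $\pi_{\{E\}}(M_{col})$ uses a module that is polytime \emph{as a set} but is not deterministic \emph{as an operator} (a graph has many 3-colourings), so it does not literally instantiate the hypothesis ``performs a deterministic polytime computation''; your own level-$k$ template, specialized downward (graph and candidate colouring as inputs, answer bit as output, project out the colouring), is the witness that matches both the hypothesis and the paper's construction. Also, your identification $\Delta^P_k = \mathrm{P}^{\Sigma^P_k}$ is the nonstandard indexing, but it is the one that makes the proposition's stated jump consistent with its $\mathrm{P}$-to-$\mathrm{NP}$ base case, so the choice is defensible given the paper's own looseness.
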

\begin{proof} We will show the property for the jump from P to NP, for illustration. The proof generalizes to all levels of PH. Let $M$ takes an instance of an NP-complete problem, such as a graph in 3-Colourability, encoded in $\sigma_G$, and what it means to be 3-Colourable, as a formula encoded in the interpretation of $\sigma_{\phi}$, and returns an instance of SAT encoded in $\varepsilon$, a CNF formula that is satisfiable if and only if the graph is 3-Colourable, and a yes/no answer bit represented by $\varepsilon_{\it answer}$. Thus, $M$ performs a deterministic (thus, polytime)
reduction. Consider $\pi_{\nu}(M) $, where $\nu = \sigma_G \cup \varepsilon_{answer}$. This module takes a graph and returns
a yes or no answer depending on whether the graph is 3-colourable. Thus, $\pi_{\nu}(M) $ solves an NP-complete problem.
\end{proof}

Union and feedback change the complexity as well. 
\ignore{
{\tt Give examples of how the sets of models change (or propositions) here.}

Later, we'll show that the modular framework bases on  Lierler and  Truszczy« nski's work does not give an increase in complexity if modules are combined.

\begin{theorem}
ref to old work about complexity, makes sense for finite domains only.
\end{theorem}
}

\ignore{
{\tt WIKIPEDIA: Operational semantics are a category of formal programming language semantics in which certain desired properties of a program, such as correctness, safety or security, are verified by constructing proofs from logical statements about its execution and procedures.

Operational semantics are classified in two categories: structural operational semantics (or small-step semantics) formally describe how the individual steps of a computation take place in a computer-based system. By opposition natural semantics (or big-step semantics) describe how the overall results of the executions are obtained. Other approaches to providing a formal semantics of programming languages include axiomatic semantics and denotational semantics.
}

{\tt WIKIPEDIA: Structural operational semantics (also called structured operational semantics or small-step semantics) was introduced by Gordon Plotkin in (Plotkin81) as a logical means to define operational semantics. The basic idea behind SOS is to define the behavior of a program in terms of the behavior of its parts, thus providing a structural, i.e., syntax oriented and inductive, view on operational semantics. An SOS specification defines the behavior of a program in terms of a (set of) transition relation(s). SOS specifications take the form of a set of inference rules which define the valid transitions of a composite piece of syntax in terms of the transitions of its components.
}

}

\section{Inference  Semantics of Modular Systems}
\label{sec:inference-semantics}

In modular systems, each agent or a knowledge base can have its own way of reasoning, that can be formulated through 
inferences or propagations.
To define inferential semantics for modular systems, we closely follow
\cite{LT:PADL:2014}.
Since input/output is not considered by the authors, their case corresponds to the instance vocabulary being empty, $\sigma = \emptyset$, i.e., model generation, and can be viewed as an analysis of the after-grounding faze. 
Since we want to separate problem descriptions  and their instances (and  reuse problem descriptions), as well as
to define additional algebraic operations (the authors consider conjunctions only), 
we need to allow $\sigma \not = \emptyset$, and present inferences on partial structures. 
This is not hard however.

We start by assuming that there is a constant for every element of the domains.
We view structures as sets of ground atoms.  
We now closely follow and generalize the definitions of  \cite{LT:PADL:2014} from sets of propositional atoms to first-order structures,
to establish a connection  to the Modular Systems framework presented above. The propositional case then corresponds to structures 
over the domain $\{ \langle \ \rangle \}$ containing the empty tuple that interprets propositional symbols that are true.

Let a  fixed countably infinite set of ground atoms $\tau$ be given.
We use $Lit(\tau)$ to denote the set of all literals over $\tau$.
For $S\subseteq Lit(\tau)$:

$S^+:= \tau \cap S$ 

$S^-:= \{ a \in \tau \ |  \ \neg a \in S\}$

$l \in Lit(\tau)$ is unassigned in $S$ if $l\not \in M$ and $\bar{l} \not \in S$ 

$ S $ is consistent if $S^+ \cap S^- \not =  \emptyset  $

Let $C(\tau)$ be all consistent subsets of $Lit(\tau)$.
\begin{definition}[Abstract Inference Representation of $M$]
An {\em abstract inference representation $M^i$ of module $M$} over a vocabulary $\tau$ is
a finite set of pairs of the form $(S, l)$, where $S\in C(\tau)$, $l \in Lit(\tau)$, and $l \not\in Lit(\tau)$.
Such pairs are called {\em inferences of the module $M$}.
\end{definition}

In the exposition below, we view structures as sets of propositional atoms, $\cB\subseteq \tau $.

$S$ is consistent with $\cB\subseteq \tau$ if $S^+ \subseteq \cB$ and $S^- \cap \cB = \emptyset$.
Literal $l$ is consistent with $\cB\subseteq \tau$ if $\{l\}$ is consistent with $\cB$.

\ignore{
\begin{definition}
Structure $\cB\subseteq \tau $ (represented as a set of ground atoms) is a model of a module $M$ if for every inference  $(S,l)\in M$ such as $S$ is consistent with $\cB$, 
$l$ is consistent with $\cB$, too. 
\end{definition}
}
\begin{definition}[Primitive Module, Inferential Semantics]
A primitive module $M\in MS(\sigma,\varepsilon)$ is a set of $(\sigma \cup \varepsilon)$-structures $\cB $  such that 
for every inference  $(S,l)\in M^i$ such as $S$ is consistent with $\cB$, 
$l$ is consistent with $\cB$, too. 
\end{definition}
Thus, primitive modules, even when they are represented 
through abstract inferences, are sets of structures as before, and the definitions of the algebraic operations 
do not need to be changed.



The inference framework can be viewed as yet another (very useful) way of representing modules. 
Since the inference framework is abstract, we cannot prove a correspondence between  a given individual module presented 
as a set of structures or as an operator on one hand and as an inferential representation on the other in general, without specifying what 
inference mechanism is used. However, we can do it for 
particular cases such as $Ent(T)$ \cite{LT:PADL:2014}, which is left for a future paper. 

With the inference semantics as described, we can now model problems (sets of instances) rather than single
instances as a combination of other problems.
This semantics allows one to study the details of propagation of information in the process of constructing solutions 
to modular systems, through incremental construction of partial structures as in \cite{TWT:WLP:2011,TWT:WLP-INAP:2012}, but in more detail.  
This direction is left for future research.

\section{Conclusion and Future Directions}

We described a modular system framework, where primitive and compound 
 modules are  sets (classes) of structures, and combinations of modules 
 are achieved by applying algebraic operations that are a higher-order counterpart
 of Codd's relational algebra operations. An additional operation is the feedback operator
 that connects output symbols with the input ones and is used to model information propagation
 such as loops of software systems and solvers.  
 
We defined two novel semantics of modular systems, operational and 
inferential, that are equivalent to the original model-theoretic semantics \cite{TT:FROCOS:2011-long}. 
We  presented a multi-language logic, a syntactic counterpart of the algebra of modular systems.
Minimal models of modular systems 
 are introduced in a separate paper on supported modular systems, 
 see also \cite{Shahab-thesis}.

The framework of modular systems gives us, through its semantic-based approach,
a unifying perspective on multi-language formalisms and solvers.
More importantly, it  gives rise to a whole new family of multi-language KR formalisms,
where new formalisms can be obtained by instantiating specific logics defining individual 
modules. 

The framework can be used for analysis of existing KR languages. In particular, 
expressiveness and complexity results for combined formalisms can be obtained 
in a way similar to the previous work \cite{MT-essence-journal:2008,TT:NonMon30,TT:arithmetic:LPAR-17} where single-module embedded 
model expansion was used.

\bibliographystyle{aaai/aaai.bst}
\bibliography{Files/bibliography}

\begin{thebibliography}{}

\bibitem[\protect\citeauthoryear{Brewka and Eiter}{2007}]{Brewka-2007:long}
Brewka, G., and Eiter, T.
\newblock 2007.
\newblock Equilibria in heterogeneous nonmonotonic multi-context systems.
\newblock In {\em Proceedings of the 22nd National Conference on Artificial
  Intelligence (AAAI'07) - Volume 1},  385--390.
\newblock AAAI Press.

\bibitem[\protect\citeauthoryear{Denecker and Ternovska}{2004}]{DT:KR2004}
Denecker, M., and Ternovska, E.
\newblock 2004.
\newblock Inductive situation calculus.
\newblock In {\em Proc., KR-04}.

\bibitem[\protect\citeauthoryear{Denecker and
  Ternovska}{2008}]{Denecker/Ternovska:2008:TOCL-long}
Denecker, M., and Ternovska, E.
\newblock 2008.
\newblock A logic of non-monotone inductive definitions.
\newblock {\em ACM transactions on computational logic (TOCL)} 9(2):1--51.

\bibitem[\protect\citeauthoryear{Denecker \bgroup et al\mbox.\egroup
  }{2012}]{DBLP:conf/iclp/DeneckerLTV12}
Denecker, M.; Lierler, Y.; Truszczynski, M.; and Vennekens, J.
\newblock 2012.
\newblock A tarskian informal semantics for answer set programming.
\newblock In Dovier, A., and Costa, V.~S., eds., {\em ICLP (Technical
  Communications)}, volume~17 of {\em LIPIcs},  277--289.
\newblock Schloss Dagstuhl - Leibniz-Zentrum fuer Informatik.

\bibitem[\protect\citeauthoryear{Frisch \bgroup et al\mbox.\egroup
  }{2008}]{ESSENCE}
Frisch, A.~M.; Harvey, W.; Jefferson, C.; Mart\'{\i}nez-Hern\'{a}ndez, B.; and
  Miguel, I.
\newblock 2008.
\newblock Essence: A constraint language for specifying combinatorial problems.
\newblock {\em Constraints} 13:268--306.

\bibitem[\protect\citeauthoryear{Giacomo, Patrizi, and
  Sardi{\~n}a}{2013}]{DBLP:journals/ai/GiacomoPS13}
Giacomo, G.~D.; Patrizi, F.; and Sardi{\~n}a, S.
\newblock 2013.
\newblock Automatic behavior composition synthesis.
\newblock {\em Artif. Intell.} 196:106--142.

\bibitem[\protect\citeauthoryear{Immerman}{1982}]{Immerman}
Immerman, N.
\newblock 1982.
\newblock Relational queries computable in polynomial time.
\newblock In {\em STOC '82: Proceedings of the 14th Annual ACM Symposium on
  Theory of Computing},  147--152.

\bibitem[\protect\citeauthoryear{J{\"a}rvisalo \bgroup et al\mbox.\egroup
  }{2009}]{JOJN}
J{\"a}rvisalo, M.; Oikarinen, E.; Janhunen, T.; and Niemel{\"a}, I.
\newblock 2009.
\newblock A module-based framework for multi-language constraint modeling.
\newblock In {\em Proceedings of the 10th International Conference on Logic
  Programming and Non-monotonic Reasoning (LPNMR'09)}, volume 5753 of {\em
  Lecture Notes in Computer Science (LNCS)},  155--168.
\newblock Springer-Verlag.

\bibitem[\protect\citeauthoryear{Kolokolova \bgroup et al\mbox.\egroup
  }{2010}]{Kolokolova:complexity:LPAR:long}
Kolokolova, A.; Liu, Y.; Mitchell, D.; and Ternovska, E.
\newblock 2010.
\newblock On the complexity of model expansion.
\newblock In {\em Proc., 17th Int'l Conf. on Logic for Programming, Artificial
  Intelligence and Reasoning (LPAR-17)},  447--458.
\newblock Springer.
\newblock LNCS 6397.

\bibitem[\protect\citeauthoryear{Lierler and Truszczynski}{2014}]{LT:PADL:2014}
Lierler, Y., and Truszczynski, M.
\newblock 2014.
\newblock Abstract modular inference systems and solvers.
\newblock In {\em Proceedings of the 16th International Symposium on Practical
  Aspects of Declarative Languages (PADL'14)}.

\bibitem[\protect\citeauthoryear{Mitchell and Ternovska}{2005}]{MT05}
Mitchell, D.~G., and Ternovska, E.
\newblock 2005.
\newblock A framework for representing and solving {NP} search problems.
\newblock In {\em Proc. AAAI},  430--435.

\bibitem[\protect\citeauthoryear{Mitchell and
  Ternovska}{2008}]{MT-essence-journal:2008}
Mitchell, D.~G., and Ternovska, E.
\newblock 2008.
\newblock Expressiveness and abstraction in {\sc essence}.
\newblock {\em Constraints} 13(2):343--384.

\bibitem[\protect\citeauthoryear{Plotkin}{1981}]{Plotkin}
Plotkin, G.
\newblock 1981.
\newblock A structural approach to operational semantics.
\newblock Technical Report DAIMI FN-19, Computer Science Department, Aarhus
  University.
\newblock Also published in: Journal of Logic and Algebraic Programming,
  60-61:17-140, 2004.

\bibitem[\protect\citeauthoryear{Tasharrofi and
  Ternovska}{2010a}]{TT:arithmetic:LPAR-17}
Tasharrofi, S., and Ternovska, E.
\newblock 2010a.
\newblock {PBINT}, a logic for modelling search problems involving arithmetic.
\newblock In {\em Proceedings of the 17th Conference on Logic for Programming,
  Artificial Intelligence and Reasoning (LPAR'17)}.
\newblock Springer.
\newblock LNCS 6397.

\bibitem[\protect\citeauthoryear{Tasharrofi and Ternovska}{2010b}]{TT:NonMon30}
Tasharrofi, S., and Ternovska, E.
\newblock 2010b.
\newblock Built-in arithmetic in knowledge representation languages.
\newblock In {\em NonMon at 30 (Thirty Years of Nonmonotonic Reasoning)}.

\bibitem[\protect\citeauthoryear{Tasharrofi and
  Ternovska}{2011}]{TT:FROCOS:2011-long}
Tasharrofi, S., and Ternovska, E.
\newblock 2011.
\newblock A semantic account for modularity in multi-language modelling of
  search problems.
\newblock In {\em Proceedings of the 8th International Symposium on Frontiers
  of Combining Systems (FroCoS)},  259--274.

\bibitem[\protect\citeauthoryear{Tasharrofi and Ternovska}{2014}]{TT:KR2014}
Tasharrofi, S., and Ternovska, E.
\newblock 2014.
\newblock Generalized multi-context systems.
\newblock In {\em Proceedings of the 14th International Conference on
  Principles of Knowledge Representation and Reasoning (KR2014)}.

\bibitem[\protect\citeauthoryear{Tasharrofi, Wu, and
  Ternovska}{2011}]{TWT:WLP:2011}
Tasharrofi, S.; Wu, X.~N.; and Ternovska, E.
\newblock 2011.
\newblock Solving modular model expansion tasks.
\newblock In {\em Proceedings of the 25th International Workshop on Logic
  Programming (WLP'11)}, volume abs/1109.0583.
\newblock Computing Research Repository (CoRR).

\bibitem[\protect\citeauthoryear{Tasharrofi, Wu, and
  Ternovska}{2012}]{TWT:WLP-INAP:2012}
Tasharrofi, S.; Wu, X.~N.; and Ternovska, E.
\newblock 2012.
\newblock Solving modular model expansion: Case studies.
\newblock In {\em Postproceedings of the 19th International Conference on
  Applications of Declarative Programming and Knowledge Management and 25th
  Workshop on Logic Programming},  175--187.
\newblock Lecture Notes in Artificial Intelligence (LNAI).

\bibitem[\protect\citeauthoryear{Tasharrofi}{2013}]{Shahab-thesis}
Tasharrofi, S.
\newblock 2013.
\newblock {\em Solving Model Expansion Tasks: System Design and Modularity}.
\newblock Ph.D. Dissertation, Simon Fraser University, Burnaby, BC, Canada.

\bibitem[\protect\citeauthoryear{Ternovska and Mitchell}{2009}]{TM:IJCAI:2009}
Ternovska, E., and Mitchell, D.~G.
\newblock 2009.
\newblock Declarative programming of search problems with built-in arithmetic.
\newblock In {\em Proc. of IJCAI},  942--947.

\bibitem[\protect\citeauthoryear{Turner}{1996}]{Splitting:Turner:96}
Turner, H.
\newblock 1996.
\newblock Splitting a default theory.
\newblock In {\em Proceedings of the 13th National Conference on Artificial
  intelligence (AAAI'96) - Volume 1},  645--651.
\newblock AAAI Press.

\bibitem[\protect\citeauthoryear{Vennekens, Gilis, and
  Denecker}{2006}]{algebraic-split:VGD}
Vennekens, J.; Gilis, D.; and Denecker, M.
\newblock 2006.
\newblock Splitting an operator: Algebraic modularity results for logics with
  fixpoint semantics.
\newblock {\em ACM Transactions on Computational Logic} 7(4):765--797.

\end{thebibliography}

\section{Appendix}

\begin{example}
We illustrate models of a simple modular system with feedback operator.
Consider the following axiomatization $P_{M_0}$ of a primitive module ${M_0}$, where $\sigma_{M_0} = \{ i\}$ and
$\varepsilon_{M_0} = \{ a, b \}$.
$$ P_{M_0} :=
\left \{ {\cal L}_{\it SM}: \
\begin{array}{l} 
a\leftarrow i, not \ b, \\
b\leftarrow i, not \ a. 
\end{array}
\right \}
$$
We will demonstrate how the set of models of this program changes when we use the feedback operator.
When the input $i$ is true (given by the corresponding instance structure), then 
$$StableMod(P_{M_0}, \ i= true ) = \{  \{ a \} , \{ b \} \}.$$
When  $i$ is false, there is one model, where everything is false, 
$$StableMod(P_{M_0},\  i= false) = \{  \emptyset\}.$$
Module ${M_0}$ is the set of structures for the entire $\sigma_{M_0} \cup \varepsilon_{M_0}$ vocabulary.
Since we are dealing with a propositional case, each structure is represented by a set of atoms that are true in that structure.
$$
		{M_0} = \{   \{ i,a \} , \{i, b \}, \emptyset  \}.
$$
Now consider a different module, ${M_1}$, with $\sigma_{M_1} = \{ i,  a, b \}$ and
$\varepsilon_{M_1} = \{a', b' \}$, axiomatized by 

$$ P_{M_1} :=
\left \{ {\cal L}_{\it SM}: \
\begin{array}{l} 
a'\leftarrow i, not \ b, \\
b'\leftarrow i, not \ a. 
\end{array}
\right \}
$$

\begin{figure}
  \centering
\begin{tikzpicture}[scale=0.6,transform shape]

 \path \primitivemodule {M4}{$\begin{array}{l} 
a'\leftarrow i, not \ b, \\
b'\leftarrow i, not \ a. 
\end{array}$};

 \path [line] ($(modM4.north west)!0.23!(modM4.north east)$) -- +(-0,1) node[left=9pt, above=1pt]{a'};
 
  \path [line] ($(modM4.north west)!0.63!(modM4.north east)$) -- +(-0,1) node[left=9pt, above=1pt]{b'};
  
   \path [line] ($(modM3.north west)!0.17!(modM3.north east)$) -- +(-0,1.5) node[left=9pt, below=6pt]{a};
      \path [line] ($(modM3.north west)!0.43!(modM3.north east)$) -- +(-0,1.5) node[left=9pt, below=6pt]{i};
 
  \path [line] ($(modM3.north west)!0.73!(modM3.north east)$) -- +(-0,1.5) node[left=9pt, below=6pt]{b};

 \begin{pgfonlayer}{background}
    \coordinate (A) at (-5, -3.5);
    \coordinate (B) at (5, 3);
    \path[rounded corners, draw=black!50, dashed]
   (A) rectangle (B);
 \end{pgfonlayer}

\end{tikzpicture}
  \caption{Module  $M_1$.}\label{fig:simple-modular-system}
\end{figure}
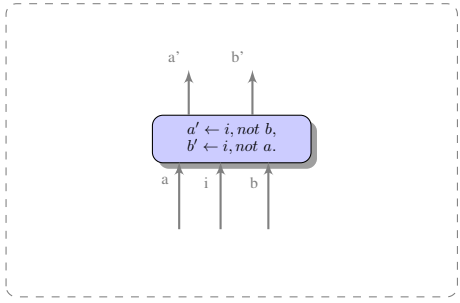

This modular system is deterministic, -- for each input (each of the eight possible interpretations of $i$, $a$ and $b$), there is at most one model. 

\begin{center}
  \begin{tabular}{| c | c | c || c | }
    \hline
    $i $& $a$ & $b$ & Models  of $M_1$\\ \hline

    $\bot$ & $ \bot$ & $ \bot$ & $\{ \emptyset \}$ \\ \hline

    $\bot$ & $ \top$ & $ \bot$ & $\{ \emptyset \}$ \\ \hline

    $\bot$ & $ \bot$ & $ \top$ & $\{ \emptyset \}$ \\ \hline

    $\bot$ & $ \top$ & $ \top$ & $\{ \emptyset \}$ \\ \hline

    $\top$ & $ \bot$ & $ \bot$ & $\{  \{ i,a',b' \} \}$ \\ \hline
   
     $\top$ & $ \top$ & $ \bot$ & $\{  \{ i,a,a' \} \}$ \\ \hline

 $\top$ & $ \bot$ & $ \top$ & $\{ \{ i,b,b' \} \}$ \\ \hline

 $\top$ & $ \top$ & $ \top$ & $\{ \{ i,a,b \} \}$ \\ \hline

    \hline
  \end{tabular}
\end{center}

Thus, we have: 
$$
M_1 = \{ \emptyset       ,  \{ i,a',b' \},      \{ i,a,a' \},  \{ i,b,b' \}  , \{ i,a,b \}           \}.
$$
If we add feedback, we obtain the following system $ M_2 = M_1 [a=a'][b=b'] $.
Its input is $i$, all other symbols are in the expansion vocabulary. The  models are:
\begin{center}
  \begin{tabular}{| c  || c | }
    \hline
    $i $&  Models  of $M_2$\\ \hline

    $\bot$ &  $\{ \emptyset \}$ \\ \hline

    $\bot$  & $\{ \emptyset \}$ \\ \hline

     $\top$ &   $\{  \{ i,a,a' \} \}$ \\ \hline

 $\top$ &  $\{ \{ i,b,b' \} \}$ \\ \hline

    \hline
  \end{tabular}
\end{center}

$$
M_2 = M_1 [a=a'][b=b'] =   \{ \emptyset       ,      \{ i,a,a' \},  \{ i,b,b' \}             \}.
$$
As we see here, after adding feedback, for the same input $i$, we obtain two different 
models. Thus, by means of feedback, a deterministic system $M_1$  was turned into a non-deterministic system $M_2$.
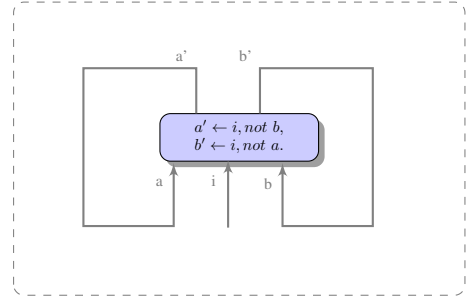
\begin{figure}
  \centering
\begin{tikzpicture}[scale=0.6,transform shape]

 \path \primitivemodule {M4}{$\begin{array}{l} 
a'\leftarrow i, not \ b, \\
b'\leftarrow i, not \ a. 
\end{array}$};

 \path [line] ($(modM4.north west)!0.23!(modM4.north east)$) -- +(-0,1) 
 node[left=9pt, above=1pt]{a'}-- +(-2.5,1) -- +(-2.5, -2.5) -- +(-0.5, -2.5) -- +(-0.5, -1.1) node[left=9pt, below=6pt]{a};
 
 \path [line] ($(modM4.north west)!0.63!(modM4.north east)$) -- +(-0,1) 
  node[left=9pt, above=1pt]{b'} -- +(2.5,1) -- +(2.5, -2.5) -- +(0.5, -2.5) -- +(0.5, -1.1) node[left=9pt, below=6pt]{b};
 
\path [line] ($(modM3.north west)!0.43!(modM3.north east)$) -- +(-0,1.5) node[left=9pt, below=6pt]{i};

 \begin{pgfonlayer}{background}
    \coordinate (A) at (-5, -3.5);
    \coordinate (B) at (5, 3);
    \path[rounded corners, draw=black!50, dashed]
   (A) rectangle (B);
 \end{pgfonlayer}

\end{tikzpicture}
  \caption{Module  $M_2$.}\label{fig:simple-modular-system}
\end{figure}

This modular system is deterministic, -- for each input (each of the eight possible interpretations of $i$, $a$ and $b$), there is at most one model. 
Notice also that 
$$
 \pi_{\{i,a,b\}} (M_1 [a=a'][b=b']) =M_0.
$$

\end{example}

\end{document}